\documentclass[conference]{IEEEtran} 
\usepackage[usenames,dvipsnames]{xcolor}

\usepackage{amsmath}
\usepackage{amssymb}
\usepackage{overpic}
\usepackage{framed}
\usepackage{multirow}
\usepackage{hhline}
\usepackage{lettrine}
\usepackage{bm}
\usepackage{graphicx}
\usepackage{caption}
\usepackage{algorithm}
\usepackage{algorithmic}
\newtheorem{theorem}{Theorem}

\newtheorem{lemma}[theorem]{Lemma}

\newtheorem{definition}{Definition}

\newcommand{\ve}{\varepsilon}
\newcommand{\tcb}[1]{}                                             

\makeatletter
\def\moverlay{\mathpalette\mov@rlay}
\def\mov@rlay#1#2{\leavevmode\vtop{%
   \baselineskip\z@skip \lineskiplimit-\maxdimen
   \ialign{\hfil$\m@th#1##$\hfil\cr#2\crcr}}}
\newcommand{\charfusion}[3][\mathord]{
    #1{\ifx#1\mathop\vphantom{#2}\fi
        \mathpalette\mov@rlay{#2\cr#3}
      }
    \ifx#1\mathop\expandafter\displaylimits\fi}
\makeatother

\newcommand{\bigcupdot}{\charfusion[\mathop]{\bigcup}{\cdot}}

\begin{document}
\title{Scheduling Algorithms for 5G Networks with Mid-haul Capacity Constraints}
\author{
\IEEEauthorblockN{Abhishek Sinha\IEEEauthorrefmark{1}, Matthew Andrews\IEEEauthorrefmark{2}, Prasanth Ananth\IEEEauthorrefmark{2}}
\IEEEauthorblockA{\IEEEauthorrefmark{1}Dept. of Electrical Engineering, IIT Madras, Chennai, India}
\IEEEauthorblockA{\IEEEauthorrefmark{2}Nokia Bell Labs, Murray Hill, NJ, USA}
Email: \IEEEauthorrefmark{1} abhishek.sinha@ee.iitm.ac.in,
\IEEEauthorrefmark{2} matthew.andrews@nokia-bell-labs.com,
\IEEEauthorrefmark{2} prasanth.ananth@nokia-bell-labs.com
}
\maketitle
\begin{abstract}
We consider a virtualized RAN architecture for 5G networks where the Remote Units are connected to a central unit via a mid-haul. To support high data rates, the mid-haul is realized with a Passive Optical Network (PON). In
this architecture, the data are stored at the central unit until the
scheduler decides to transmit it through the mid-haul to an appropriate remote unit, and then over the air at the same slot. 
We study an optimal scheduling problem that arises in this context. This problem has two key features. First, multiple cells must be scheduled simultaneously for efficient operation. Second, the interplay between the time-varying wireless interface rates and the fixed capacity PON needs to be handled efficiently.  In this paper, we take a comprehensive look at this resource allocation problem by formulating it as a utility-maximization problem. Using combinatorial techniques, we derive useful structural properties of the optimal allocation and utilize these results to design polynomial-time approximation algorithms and a pseudo-polynomial-time optimal algorithm. Finally, we numerically
compare the performance of the proposed algorithms to heuristics which are natural generalizations of the ubiquitous Proportional Fair algorithm. 
\end{abstract}


\section{Introduction}
\lettrine[]{\textbf{T}}{ }wo inexorable trends will have a significant impact on the future of 5G wireless access. The first is the trend towards denser small cells
with deeper fiber, which is sometimes described using the slogan
``long wires and short wireless''. The second is the trend towards
virtualized Radio Access Network (vRAN) architectures in which part of the
processing and network intelligence (including scheduling) takes place in the central units
(CUs) located in a cloud data center (sometimes called an {\em edge cloud}) and 
then the data are carried over a transport network called {\em mid-haul} to 
a set of remote units (RUs).

A passive optical network (PON) is ideally suited for such mid-haul due to
its high capacity, lower cost, and ability to reuse the existing fiber-to-the-x
(FTTx) distribution networks. However, if we utilize such an
architecture, then we need to ensure that the scheduling decisions in the central
units respect the limited PON capacity, in addition to the time-varying air interface data rate. The goal of this paper is to investigate how such scheduling can be carried out efficiently.

There are many variants of the vRAN architecture that differ based on how the
processing is split between the CUs and the RUs. 
At a high-level, we can categorize these options into two types. 
In a {\em front-haul} architecture, all processing right down to
the baseband takes place in the edge cloud.  On the contrary, in this paper, we will be concerned with the so-called {\em mid-haul} architecture \cite{pfeiffer2015next}, \cite{DotschDMSSS13}, where some of the higher-layer processing takes place in the edge cloud 
while the lower physical layer processing takes place at the RUs. Hence, the mid-haul architecture requires less PON bandwidth compared to the front-haul architecture. However, 
the mid-haul bandwidth requirement changes with time depending on the actual amount of 
user traffic and the instantaneous wireless channel conditions. 
In this paper, we address the following question- How should the
central units schedule the wireless transmissions at RUs efficiently in the \emph{full-buffer} traffic regime so that the \emph{fixed} PON capacity constraint and the \emph{time-varying} wireless interface rate constraints are satisfied (see Figure~\ref{PON-fig})?
 
Intuitively, the centralized scheduler must take into account the limited PON capacity in addition to the instantaneous air interface channel conditions for efficient operation. To
minimize latency (\emph{e.g.,} in the case of URLLC traffic), the scheduling should be done in such a way that there is \emph{no queue build-up} at the RUs.

The conflicting nature of the constraints makes this resource allocation problem challenging to solve.
For the wireless air interface, the fundamental
resource units are the {\em resource blocks} (RBs), which give rise to time-varying bit
rates according to the dynamic wireless channel conditions. On the
other hand, for the PON, the fundamental resource units are \emph{fixed-capacity} PON slices. As a concrete example, the air interface scheduler may 
wish to serve a user that is in a good channel condition. However, 
it may not be able to do that if the PON cannot
handle the resulting data rate. In this paper, we undertake a comprehensive study of this problem and develop algorithms with provable guarantees to solve it efficiently. \\
The rest of the paper is organized as follows:
\begin{itemize}
\item  In Section~\ref{formulation}, we formulate the problem by using the theory of gradient ascent over time-varying channels \cite{stolyar2005asymptotic}. This methodology
  decomposes the long-run average utility objective into slot-by-slot local objectives.
\item  In Section~\ref{subopt}, we give an illustrative example to show why
  the standard greedy algorithms (including, \emph{e.g.}, the Proportional Fair scheduling) fail to provide an optimal
  solution. The fundamental difficulty is that greedy approaches
  cannot optimally handle the mismatch between the air interface constraint and the
  PON capacity constraint.
\item  In Section~\ref{structure}, we present an efficient algorithm that gives the optimal wireless rate allocations for a fixed RB assignment to the user. This algorithm is used in our later developments. 
%
\item In Section~\ref{s:singleC}, we present two algorithms for the special
  case when only the overall PON capacity constraint is binding. In
  particular, we present a polynomial-time $2$-approximation algorithm
  based on LP-rounding. This algorithm exploits the
  special structure of the basic feasible solutions of the associated LP. We
  also provide an optimal Dynamic Programming (DP) algorithm that runs in
  pseudo-polynomial time. 
  \item In Section~\ref{s:algo}, we present a greedy $2$-approximation 
  matroid-based algorithm for the general case, where the individual RU-specific capacity constraints, as well as the overall PON capacity constraint are active. 
 \item  In Section~\ref{s:prelim}, we present two natural heuristics which are inspired from the ubiquitous Proportional Fair algorithm. 
\item  In Section~\ref{simulation}, we evaluate the proposed algorithms via numerical simulation and examine how they compare to the heuristics. 
\item  Finally, Section \ref{discussions} concludes the paper after a brief discussion on related works.
\end{itemize}
\begin{figure}
	\centering
		\begin{overpic}[width=0.47\textwidth]{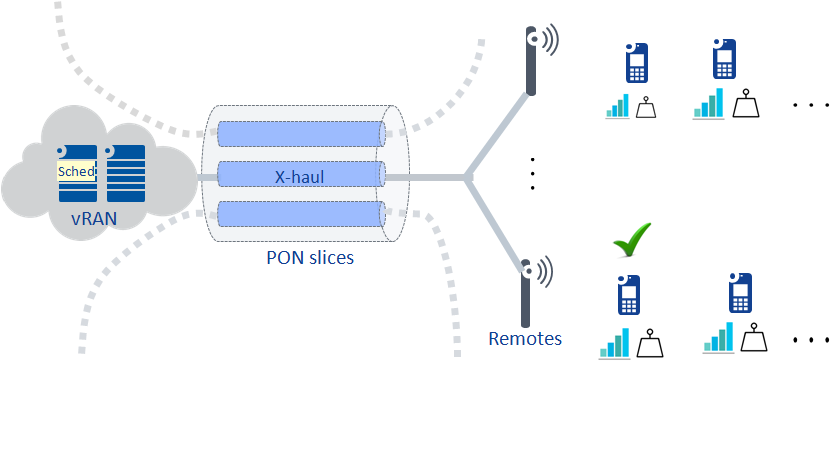}
		\put(55,39){\footnotesize{$C_i$} }
		\put(62,55){\footnotesize{$\textrm{RU}_i$} }
		\put(7,18){\footnotesize{$(\textrm{CU}$)} }
		\put(74,54){\footnotesize{$\textrm{User}_{i,j}$} }
		\put(35,28.3){\footnotesize{$C$} }
		\put(22,18){\footnotesize{(\textcolor{red}{Capacity bottleneck})}}
		\put(3,22){\footnotesize{\textcolor{black}{Central Unit}}}
		\put(55,8){\footnotesize{\textcolor{black}{Remote Units}}}
		\put(19,8){\footnotesize{\textcolor{black}{Passive Optical Network}}}
		\put(32,4){\footnotesize{\textcolor{black}{(PON)}}}
		\put(70,35){\footnotesize{(time-varying channel)}}
		\end{overpic}
        \caption{Schematic of a split-processing vRAN architecture}
		\label{PON-fig}
\end{figure}
\section{Problem Formulation} \label{formulation}
\emph{System Model:} We consider a split-processing vRAN architecture as shown in Figure~\ref{PON-fig}.
There are $m$ Remote Units (RUs) that transmit data over the air to the  wireless end
users connected to it. A Passive Optical Network (PON) connects a Central Unit (CU)
to the Remote Units (RUs). At each time slot, each RU can transmit data on $\kappa$ Resource
Blocks (RBs), each of which corresponds to a set of contiguous OFDM
carriers. At every RU, each RB can be assigned to at most one user at a slot. We assume that inter-RU interference is negligible, which is a reasonable assumption given the advent of the CoMP technology \cite{liu2012combating}.  

\emph{Traffic Model:}  
For simplicity, we focus on the downlink
traffic only. We assume that the users are infinitely backlogged, i.e., the users' respective data buffers in the CU are always full. In order to keep the latency small, data is \emph{not buffered} in the Remote Units. Hence, all scheduled data from the CU must be delivered to the users through the PON and the wireless interface at the RUs at the \emph{same} slot. 
All scheduling decisions are assumed to be made by the CU. 

\emph{Decision Variables and Constraints:} We use the symbol $i$ to index the RUs, the pair $(i,j)$ to index the $j$\textsuperscript{th} user associated with the $i$\textsuperscript{th} RU (denoted by $\textrm{RU}_i$), and
$k$ to index the RBs. Let $C$ be the total capacity of the PON mid-haul, and let $C_i$ be the capacity of the optical fibers connecting $\textrm{RU}_i$ to the PON (see Fig. \ref{PON-fig}).
Let $\gamma_{ijk}(t)$ denote
the instantaneous air-interface rate for the user $(i,j)$ on the RB $k$ at slot $t$. 
Let $x_{ijk}(t) \in \{1,0\}$ be a binary \emph{decision variable}
representing whether or not the RB $k$ at $\textrm{RU}_i$ is
assigned to the user $(i,j)$ at slot $t$. Let $y_{ijk}(t)$ be a non-negative \emph{decision variable} denoting the rate
allocated to the user $(i,j)$ on the RB $k$ at slot $t$. 

The limited PON capacity and air-interface rates enforce the following
constraints on the instantaneous decision variables $\bm{x}(t),\bm{y}(t)$: 
\begin{eqnarray}
y_{ijk}(t)&\le& \gamma_{ijk}(t)x_{ijk}(t),~~~\forall i,j,k,\label{eq:rate}\\
\sum_j x_{ijk}(t)&\le& 1,~~~~\forall i,k,\label{eq:uniq}\\
\sum_{jk} y_{ijk}(t)&\le& C_i,~~~~\forall i,\label{eq:base}\\
\sum_{ijk} y_{ijk}(t)&\le& C,\label{eq:tot}\\
x_{ijk}(t)&\in&\{0,1\}, ~ y_{ijk}\geq 0, ~~~\forall i,j,k \label{eq:bin}.
\end{eqnarray}
\emph{Discussions on the constraints \eqref{eq:rate}-\eqref{eq:bin}:} The inequality~(\ref{eq:rate}) reflects the fact that the allocated rate $y_{ijk}(t)$ can be
non-zero only if the RB $k$ is assigned to the user $(i,j)$ at time
$t$. Moreover, due to time-varying nature of the air-interface rate, the allocated data rate at slot $t$ can be at most
$\gamma_{ijk}(t)$ for that assignment. Inequality~(\ref{eq:uniq}) states that at most one user may be assigned to any RB $k$ on $\textrm{RU}_i$. Inequalities
(\ref{eq:base}) and (\ref{eq:tot}) denote the mid-haul capacity
constraints. The constraint \eqref{eq:bin} simply denotes the fact that the variables $x_{i,j,k}(t)$ are binary and the allocated rates are non-negative.

\emph{Special Case:} In general, the capacity of optical fibers decrease sharply with their length \cite{ramaswami2009optical}.  Since the distance between the PON remote end-point and any RU is much shorter than the size of the PON itself, it is often the case that $C_i >> C, \forall i$.  To exploit this fact in designing algorithms, we pay particular attention to the important special case where the RU specific constraints \eqref{eq:base} are relaxed (effectively by setting $C_i=\infty, \forall i $), and the system is limited by the overall PON capacity constraint \eqref{eq:tot} only (see Section \ref{s:singleC}). 

\emph{Objective:} In this paper, we formulate the resource allocation problem as a utility maximization problem. Let $\bar{r}_{ij}$ be the long-term average data rate for the user $(i,j)$, i.e., 
\begin{equation*}
	\bar{r}_{ij}= \liminf_{T \to \infty} \frac{1}{T}\sum_{t=1}^{T} \sum_{k} y_{ijk}(t).
\end{equation*}
Let $U(\cdot)$ be a strictly concave smooth utility function. Our objective is to find a scheduling policy that maximizes the sum-utility of all users defined as follows:
\begin{equation} \label{util}
U(\bar{\bm{r}})=\sum_{ij} U(\bar{r}_{ij}). 	
\end{equation}
Following the development in \cite{stolyar2005asymptotic}, for each user $(i,j)$, we first define an exponentially smoothed long-term service rate $R_{ij}(t)$, which
 evolves as follows:
\begin{eqnarray} \label{state-update}
	R_{ij}(t+1)=(1-\beta)R_{ij}(t)+\beta\sum_k y_{ijk}(t), ~~ R_{ij}(1)=0,
\end{eqnarray}
 where $\beta > 0$ is a small positive constant. From \cite{stolyar2005asymptotic}, it follows that the long-term objective \eqref{util} is maximized by
finding the instantaneous decision variables $\bm{x}(t), \bm{y}(t)$ at each slot $t$ that solves the following problem, referred to as {\sc Single Shot}: 
\begin{eqnarray} \label{opt1} 
\max_{\bm{x}(t), \bm{y}(t)}\sum_{ij}U'(R_{ij}(t))\sum_k y_{ijk}(t),
\end{eqnarray}
subject to the constraints (\ref{eq:rate})-(\ref{eq:bin}). We
therefore focus on the {\sc Single-Shot} problem for the slot $t$ for the remainder of
the paper. 

For the common case of logarithmic utility, in which $U(x)=\log x$, the above per-slot problem \eqref{opt1} becomes:
\begin{eqnarray} \label{opt}
\max_{\bm{x}(t), \bm{y}(t)} \sum_{ij}\frac{\sum_k y_{ijk}(t)}{R_{ij}(t)},
\end{eqnarray}
subject to the constraints (\ref{eq:rate})-(\ref{eq:bin}).
For concreteness, we will use the objective (\ref{opt}) throughout the paper,
but all our results apply to any strictly concave smooth utility
function $U(\cdot)$.

 It should be noted that in the case of inelastic traffic, where the objective is to ensure network stability while achieving throughput-optimality, a standard algorithm is {\sc Max-Weight} \cite{tassiulas1992stability}. In the {\sc Max-Weight} algorithm, one is required to solve an identical problem to {\sc Single Shot} at every slot, where  the factor $R_{ij}(t)^{-1}$ in the objective \eqref{opt} is replaced with the corresponding queue-length. Hence, the algorithmic techniques that we develop for {\sc Single Shot} directly apply to the case of {\sc Max-Weight} algorithm for inelastic traffic. \\
 Since, we solve the {\sc Single Shot} problem at every slot $t$, to avoid notational clutter, we will be dropping the time-argument $t$ from all variables henceforth. 
\section{Sub-Optimality of the Proportional Fair Scheduler}  \label{subopt}

For solving the {\sc Single Shot} problem \eqref{opt}, the well-known Proportional Fair scheduler ({\sc PF}) assigns the RB $k$ at $\textrm{RU}_i$ 
to the user $(i,j)$ that maximizes the index $\gamma_{ijk}/R_{ij}$ \cite{PF}.  It does so without taking into account the capacity constraints (\ref{eq:base})-(\ref{eq:tot}) for the PON. 
We prove that the {\sc PF} algorithm is not optimal 
due to the presence of the capacity constraints. 
\begin{framed}
\begin{lemma}[Sub-Optimality of PF]
\label{l:greedynotoptimal}
The Proportional Fair (PF) scheduler is not optimal for the {\sc Single Shot} problem \eqref{opt}, in general.
\end{lemma}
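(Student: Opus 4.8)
The plan is to establish the statement by exhibiting a single explicit instance of the {\sc Single Shot} problem \eqref{opt} on which {\sc PF} is strictly suboptimal; since the claim only asserts failure ``in general,'' one well-chosen counterexample suffices. I would work in the simplest nontrivial regime, namely a single RU ($m=1$, so that the per-RU constraint \eqref{eq:base} and the aggregate constraint \eqref{eq:tot} coincide) equipped with two RBs and two associated users, and I would choose the PON capacity $C$ small enough that the capacity constraint is binding. The guiding intuition is that, once the PON is the bottleneck, the marginal value of a unit of mid-haul capacity devoted to user $(i,j)$ is governed by $1/R_{ij}$ alone, whereas {\sc PF} ranks RB assignments by the index $\gamma_{ijk}/R_{ij}$; this mismatch is precisely what I want to exploit.

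Concretely, I would take one user with high scheduling priority but a poor channel -- small $R_{1,1}$ together with small air-interface rates $\gamma_{1,1,k}$ on both RBs -- and a second user with low priority but an excellent channel -- larger $R_{1,2}$ together with very large $\gamma_{1,2,k}$. With the numbers tuned so that $\gamma_{1,2,k}/R_{1,2} > \gamma_{1,1,k}/R_{1,1}$ on each RB $k$, the {\sc PF} rule assigns \emph{both} RBs to the good-channel user $(1,2)$. The subsequent optimal rate allocation under this fixed assignment can deliver only $\min(\gamma_{1,2,1}+\gamma_{1,2,2},\,C)=C$ units of data, all credited at the low weight $1/R_{1,2}$, yielding an objective value of $C/R_{1,2}$.

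I would then contrast this with the alternative assignment that gives RB $1$ to the high-priority user $(1,1)$ and RB $2$ to user $(1,2)$. Because $1/R_{1,1} > 1/R_{1,2}$, the capacity-aware rate allocation first saturates user $(1,1)$ up to its air-interface ceiling $\gamma_{1,1,1}$ (a small amount) and then spends the remaining PON capacity on user $(1,2)$; the resulting strict improvement over {\sc PF} is exactly $\gamma_{1,1,1}\left(\tfrac{1}{R_{1,1}}-\tfrac{1}{R_{1,2}}\right)>0$, which completes the argument. The main obstacle is not conceptual but bookkeeping: I must pick the four $\gamma$ values, the two $R$ values, and $C$ so that (i) {\sc PF}'s per-RB index strictly prefers user $(1,2)$ on both RBs, (ii) constraint \eqref{eq:tot} is the active one under {\sc PF}'s assignment, and (iii) the optimal rate allocation for each fixed assignment -- a small linear program solved by prioritizing users in decreasing order of $1/R_{ij}$, as in the structural result of Section~\ref{structure} -- produces a strict gap. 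Verifying the optimal rate allocation for each candidate assignment is the one step that needs care, but for a two-user, two-RB instance it reduces to a direct comparison.
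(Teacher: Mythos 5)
Your proposal is correct and takes essentially the same route as the paper: a one-RU, two-user counterexample in which {\sc PF}'s index $\gamma_{ijk}/R_{ij}$ sends every RB to the good-channel user, while a binding PON capacity makes it strictly better to divert some RBs (and capacity) to the user with the larger weight $1/R_{ij}$ — the paper just instantiates this with four RBs, $C=7$, $R=(1,2)$, $\gamma=(1,4)$, obtaining objective $5$ versus {\sc PF}'s $7/2$. The only thing left to do in your write-up is to fix concrete numbers satisfying your conditions (e.g.\ $R_{1,1}=1$, $R_{1,2}=2$, $\gamma_{1,1,k}=1$, $\gamma_{1,2,k}=4$, $C=4$ gives $2.5$ versus $2$), which is immediate.
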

\end{framed}
\begin{IEEEproof}
Our counter-example has one RU with two users and four RBs. 
Let $C=7$. Note that, we don't need to specify the separate $C_i$ values since we
have only one RU. Let the aggregate rates for the two users at slot $t$ be,
$$
R_{00}=1~~R_{01}=2.
$$
Let the instantaneous channel rates be,
$$\gamma_{00k}=1~~~
\gamma_{01k}=4~~~\forall k. 
$$
Since $\gamma_{01k}/R_{01}=2>1=\gamma_{00k}/R_{00}$,
PF will pick user 1 for every RB,
i.e.\ $x_{01k}=1$ and $x_{00k}=0$ for all $k$. Given the PON capacity constraint, we choose the $y$ values such that $\sum_k
y_{01k}=7$ and $\sum_k y_{00k}=0$. Hence, the total objective value for {\sc Single Shot} is
$7/2$. 

A better solution would put user $0$ on 3 RBs and user $1$
on a single RB. In this case we have $x_{00k}=1$ and
$y_{00k}=1$ for $k<3$ and $x_{013}=1$ and $y_{013}=4$. The total
objective for this solution is $\frac{3}{1}+\frac{4}{2}=5$. 
\end{IEEEproof}
%
%
\section{Structural Results}
\label{structure}
In some sense, the difficulty of maximizing \eqref{opt} stems from the fact that
the optimal solution may split the total RB allocation between the users with high values
of $1/R_{ij}$ and the users with high values of $\gamma_{ijk}/R_{ij}$. Recall from
Lemma~\ref{l:greedynotoptimal} that, if the {\sc PF} algorithm violates
the capacity constraints, then it might lead to a suboptimal allocation. We now state an intuitively obvious result that if this violation does not happen then, {\sc PF} is, in fact, optimal. 
 
\begin{lemma}
Suppose that with the {\sc PF} allocation, we can set
$y_{ijk}=\gamma_{ijk}x_{ijk}, \forall i,j,k,$ without violating the
capacity constraints \eqref{eq:base}-\eqref{eq:tot}. Then {\sc PF} achieves optimality. 
\end{lemma}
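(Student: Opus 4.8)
The plan is to prove optimality by a relaxation argument: I will show that the PF full-rate allocation is optimal for a relaxation of {\sc Single Shot} obtained by deleting the coupling PON constraints, and then invoke the hypothesis that this particular allocation happens to be feasible for the original problem.

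First I would rewrite the objective \eqref{opt} in its fully disaggregated form $\sum_{ijk} y_{ijk}/R_{ij}$, which is linear in $\bm{y}$ with strictly positive coefficients $1/R_{ij}$. Consider the relaxed problem obtained by keeping only the per-RB constraints \eqref{eq:rate}, \eqref{eq:uniq}, \eqref{eq:bin} and dropping \eqref{eq:base}--\eqref{eq:tot}. Since each of the retained constraints involves a single RB pair $(i,k)$ (with \eqref{eq:uniq} summing only over the user index $j$), the relaxed problem \emph{decouples} into independent subproblems, one per RB, and no variables are shared across distinct pairs $(i,k)$.

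Next I would solve each per-RB subproblem $\max_{j}\, y_{ijk}/R_{ij}$. Constraint \eqref{eq:uniq} together with \eqref{eq:bin} permits at most one user to be assigned, and \eqref{eq:rate} then caps that user's rate at $\gamma_{ijk}$; because the coefficient $1/R_{ij}$ is positive, the optimal $y_{ijk}$ is driven to its upper bound $\gamma_{ijk}$. The subproblem therefore collapses to selecting the single user $j$ that maximizes $\gamma_{ijk}/R_{ij}$ and transmitting at full rate---which is exactly the PF rule. Hence the PF allocation with $y_{ijk}=\gamma_{ijk}x_{ijk}$ attains the optimum of the relaxed problem.

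Finally, since the relaxed problem enlarges the feasible region of {\sc Single Shot}, its optimal value upper-bounds that of {\sc Single Shot}. By hypothesis the PF full-rate allocation satisfies \eqref{eq:base}--\eqref{eq:tot} and is thus feasible for {\sc Single Shot}; as it already meets the relaxation's upper bound, it must be optimal for {\sc Single Shot} as well. The only point demanding care is the per-RB step---verifying that maximizing a positive-coefficient linear objective pushes each rate to its cap, so that the combinatorial choice reduces precisely to the PF index; everything else is the routine principle that a relaxed optimum which is feasible for the original problem is optimal for it.
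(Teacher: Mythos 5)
Your proof is correct and follows essentially the same route as the paper's: the paper's one-line argument is precisely the observation that, absent the capacity constraints, the problem decouples per RB with maximum contribution $\max_j \gamma_{ijk}/R_{ij}$, and a feasible solution attaining this upper bound is optimal. You have simply spelled out the relaxation/decoupling step in more detail.
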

\begin{IEEEproof}
Follows from the observation that the maximum objective value that we can obtain from
RB $k$ at $\textrm{RU}_i$ is $\max_j \gamma_{ijk}/R_{ij}$. If the {\sc PF} Algorithm achieves this value without violating the capacity constraints, then it
is optimal. 
\end{IEEEproof}

In order to proceed further, Lemma \ref{l:structure} presents a simple and efficient strategy that
allows us to determine the optimal allocation $\bm{y}_{\textrm{opt}}$ for any given feasible assignment $\bm{x}$. As a consequence of Lemma \ref{l:structure}, the problem \eqref{opt} reduces to determining the optimal assignment $\bm{x}^*$, \emph{i.e.}, the 
identity of the user $j$ to which the RB $k$ on $\textrm{RU}_i$ should be
assigned. The optimal amount of service $y_{ijk}$ that the user $(i,j)$ then receives is
determined by this lemma. 
\begin{lemma}
\label{l:structure}
Suppose that we are given a set of binary $\bm{x}$ values that satisfy the
feasibility constraint \eqref{eq:uniq}. We can find the
optimal rate allocation $\bm{y}_{\textrm{opt}}$ for this set of $\bm{x}$ via the following iterative scheme: 
\begin{algorithm}
\caption{Optimal Rate Allocation ($\bm{y}_{\textrm{opt}}$) for a given RB Assignment Profile ($\bm{x}$)}
\label{findy}
\begin{algorithmic}[1]
\STATE Set $\bm{y}\gets \bm{0}$.
\STATE Order the $(ijk)$ triples in decreasing order of the value of $1/R_{ij}$.  
\STATE Go
through each triple sequentially in order. When considering the triple $(ijk)$, set
$$
y_{ijk}\leftarrow \min\{\gamma_{ijk}x_{ijk},C-\sum_{i'j'k'}y_{i'j'k'},C_i-\sum_{j'k'}y_{ij'k'}\}.
$$
\STATE Set $\bm{y}_{\textrm{opt}} \gets \bm{y}$.
\end{algorithmic}
\end{algorithm}
\end{lemma}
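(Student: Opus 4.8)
The plan is to first observe that, once the assignment $\bm{x}$ is fixed, the objective \eqref{opt} becomes a \emph{linear} function of $\bm{y}$, namely $\sum_{ijk} y_{ijk}/R_{ij}$, and the remaining constraints \eqref{eq:rate}, \eqref{eq:base}, \eqref{eq:tot}, \eqref{eq:bin} are all linear in $\bm{y}$. Thus the lemma really asserts that a particular greedy procedure solves a linear program of the form $\max \sum_e w_e y_e$ subject to the box constraints $0 \le y_{ijk} \le u_{ijk}$ (with $u_{ijk} = \gamma_{ijk} x_{ijk}$ and weight $w_{ijk} = 1/R_{ij}$), the per-RU budgets $\sum_{jk} y_{ijk} \le C_i$, and the global budget $\sum_{ijk} y_{ijk} \le C$. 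First I would record the structural reason the greedy rule is correct: the singletons $\{(i,j,k)\}$, the per-RU sets $S_i = \{(i,j,k): \text{all } j,k\}$, and the full ground set form a \emph{laminar} family of capacitated subsets, so the feasible region is the independence polytope of a polymatroid, over which Edmonds' greedy algorithm is known to maximize any linear objective. Algorithm~\ref{findy} is exactly this greedy, which already tells us what to expect; the rest is a self-contained verification.

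The core of the proof would be a standard exchange argument. Order the triples $e_1, e_2, \ldots$ by nonincreasing weight exactly as in line~2 of Algorithm~\ref{findy}, let $\bm{y}^g$ be the greedy output, and let $\bm{y}^*$ be any optimal solution that coincides with $\bm{y}^g$ on the longest possible prefix $e_1, \ldots, e_{\ell-1}$. Suppose for contradiction that they differ at $e_\ell$. Because $\bm{y}^*$ matches $\bm{y}^g$ on the prefix, the two quantities feeding the $\min$ in line~3, i.e.\ the global prefix sum and the per-RU-$i_\ell$ prefix sum, are identical for both solutions, so $y^g_{e_\ell}$ equals the largest value that feasibility permits for $y^*_{e_\ell}$ given the prefix. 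This immediately rules out $y^*_{e_\ell} > y^g_{e_\ell}$ and leaves only the case $y^*_{e_\ell} < y^g_{e_\ell}$.

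In that remaining case I would increase $y^*_{e_\ell}$ toward $y^g_{e_\ell}$ and restore feasibility by decreasing a later (hence no-larger-weight) coordinate. The main obstacle is that there are \emph{two} capacity levels to respect simultaneously: if only the global budget \eqref{eq:tot} is tight at $\bm{y}^*$ I may pull mass from any positive later coordinate, but if the per-RU budget \eqref{eq:base} for $\textrm{RU}_{i_\ell}$ is tight I must pull the mass from a later coordinate lying \emph{inside} $S_{i_\ell}$, so that both the per-RU and the global sums stay feasible under the trade. The key fact to verify is that such a donor coordinate always exists: since $y^g_{e_\ell}$ is bounded by $C_{i_\ell}$ minus the prefix usage of $\textrm{RU}_{i_\ell}$, a tight per-RU constraint forces $\sum_{\ell' > \ell,\, e_{\ell'} \in S_{i_\ell}} y^*_{e_{\ell'}} \ge y^g_{e_\ell} - y^*_{e_\ell} > 0$, and the analogous inequality holds for the global budget, guaranteeing a strictly positive later coordinate of the right type. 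Because the donor has weight at most $w_{e_\ell}$, the exchange does not decrease the objective, yielding another optimal solution that agrees with $\bm{y}^g$ on a strictly longer prefix (or one with strictly larger objective, contradicting optimality outright). This contradicts the maximality of the prefix and forces $\bm{y}^* = \bm{y}^g$. I expect the two-level capacity bookkeeping, together with arguing that the chain of exchanges terminates (e.g.\ via a potential that rewards mass placed in earlier coordinates), to be the only genuinely delicate part; linearity of the objective and the box constraints make everything else routine.
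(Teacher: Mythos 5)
Your proposal is correct and follows essentially the same route as the paper's proof: an exchange argument that takes an optimal $\bm{y}^*$ agreeing with the greedy output on a maximal prefix, shows the greedy value at the first point of disagreement can only be larger, and then shifts mass from a later (lower-weight) coordinate --- at the same RU when the per-RU budget is the binding one --- without decreasing the objective. Your additional observation that the constraint family is laminar, so the feasible region is a polymatroid and Edmonds' greedy theorem applies directly, is a nice framing the paper does not use, but the self-contained verification you sketch is the paper's argument.
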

\begin{IEEEproof}
See Appendix \ref{structure:proof}.
\end{IEEEproof}

\section{Algorithms for {\sc Single Shot} with an overall PON Capacity Constraint}
\label{s:singleC}
In this Section, we consider the special case where the separate RU-specific capacity constraints \eqref{eq:base} are relaxed (by effectively setting $C_i=\infty, \forall i$). Thus the system is capacity-limited by the PON constraint \eqref{eq:tot} only. As discussed in Section \ref{formulation}, this is a practically relevant case when the PON size is much larger than the RU-to-PON access distances.\\ 
It is not hard to see that, this special case is algorithmically equivalent to the scenario where there is only one RU, and all UEs are associated with this RU. Thus, to avoid notational clutter, we drop the RU index $i$ throughout this Section. We start with the following definition:
\begin{definition}
	A feasible rate allocation vector $\bm{y}$ is called \textsc{Discrete} if $y_{jk}=x_{jk}\gamma_{jk}, \forall k$. 
\end{definition}
In other words, in a \textsc{Discrete} allocation either the RB is
allocated the maximum wireless rate given by the wireless interface rate ($\gamma_{jk}$) or it is not allocated any rate at all.
\begin{definition}
	A feasible rate allocation vector $\bm{y}$ is called \textsc{Almost
          Discrete} if $y_{jk}=x_{jk}\gamma_{jk}$ for all RBs, possibly
        excepting \emph{at most one} RB. 
\end{definition}
The significance of the above definition is borne out by the following Theorem.
\begin{framed}
\begin{theorem}\label{ad}
There exists an  optimal solution to {\sc Single Shot} which is \textsc{Almost Discrete}. 
\end{theorem}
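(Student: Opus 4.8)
The plan is to start from an arbitrary optimal solution $(\bm{x}^\star,\bm{y}^\star)$ of {\sc Single Shot} and convert it, without losing objective value, into one whose rate vector has at most one fractional coordinate. The key leverage is Lemma~\ref{l:structure}: if I freeze the assignment to $\bm{x}^\star$ and re-solve for the best compatible rates, the greedy fill it prescribes is optimal for that fixed assignment. Since $\bm{x}^\star$ is the assignment of a globally optimal pair, the regenerated rates $\bm{y}_{\textrm{opt}}$ can only match or improve the objective, so $(\bm{x}^\star,\bm{y}_{\textrm{opt}})$ is again optimal, and it suffices to show that $\bm{y}_{\textrm{opt}}$ is \textsc{Almost Discrete}.

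First I would establish the structural claim about the greedy fill. Ordering the triples $(j,k)$ by decreasing $1/R_j$ and assigning each the largest rate the residual PON budget allows, every triple processed before the budget $C$ is exhausted receives its full wireless rate $\gamma_{jk}$, and every triple processed afterwards receives $0$; only the single triple at which the running total first reaches $C$ can receive a strictly intermediate value. Hence at most one RB $k$ has $y_{jk}\neq x_{jk}\gamma_{jk}$, which is exactly the \textsc{Almost Discrete} property. The one point that needs care is that $1/R_{j}$ depends only on the user index $j$ and not on $k$, so all RBs carried by the same user are tied in the ordering; I would resolve this by filling tied RBs one at a time rather than splitting the residual budget across several of them, which is legitimate because the objective \eqref{opt} depends on a user only through its aggregate rate $\sum_k y_{jk}$ and is therefore insensitive to how that aggregate is spread over the user's RBs.

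As a self-contained cross-check I would also give a direct exchange argument: if an optimal $\bm{y}$ had two fractional RBs $k_1,k_2$ carried by users $j_1,j_2$ with $1/R_{j_1}\ge 1/R_{j_2}$, I would shift a quantity $\delta$ of rate from $k_2$ to $k_1$, which preserves feasibility of \eqref{eq:rate}--\eqref{eq:tot} (the total $\sum y$ is unchanged) and changes the objective by $\delta(1/R_{j_1}-1/R_{j_2})\ge 0$; pushing $\delta$ until either $y_{j_1 k_1}$ saturates at $\gamma_{j_1 k_1}$ or $y_{j_2 k_2}$ reaches $0$ removes one fractional RB without decreasing the objective, and iterating leaves at most one. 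I expect the main (though modest) obstacle to be precisely this bookkeeping around ties and capacity: one must first observe that any fractional RB forces the constraint \eqref{eq:tot} to be tight at optimum (otherwise raising that RB's rate strictly improves \eqref{opt}), and then verify that neither the water-filling nor the exchange step can manufacture a second fractional coordinate.
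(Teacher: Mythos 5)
Your main argument is exactly the paper's proof: fix the optimal assignment $\bm{x}^\star$, regenerate the rates via Lemma~\ref{l:structure}, and observe that the greedy fill saturates each triple at $\gamma_{jk}$ until the capacity runs out, leaving at most one partially filled RB. Your careful handling of ties and the supplementary exchange argument go beyond what the paper writes (its own second proof in the appendix instead uses a basic-feasible-solution counting argument on the induced LP), but the core route is the same and the proof is correct.
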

\end{framed}
\begin{IEEEproof}
Follows directly from Lemma~\ref{l:structure} since it implies
that once the $\bm{x}$ values are set, we can find the optimal
$\bm{y}$ values by simply going through each of them in decreasing the order of
$1/R_j$. Excepting the last one, each one is filled up to an amount $\gamma_{jk}$ before
moving on to the next one.  
\end{IEEEproof}
%
%
\emph{Note:} It is also possible to prove Theorem \ref{ad} without appealing to Lemma \ref{l:structure}. See Appendix \ref{alter} for an alternative proof using combinatorial properties of the associated LP.

In the following, we exploit the result in Theorem \ref{ad} to design a polynomial-time approximation algorithm to the {\sc Single Shot} problem with an overall PON capacity constraint. 
\subsection{Poly-time 2-Approximation Algorithm {\sc Rounding-AD}} \label{LP}
We now design an LP-based algorithm {\sc Rounding-AD} ({\sc AD} stands for {\sc Almost Discrete}) 
for approximately solving the {\sc Single Shot} problem. By substituting $y_{jk}\gets \gamma_{jk}x_{jk}$ and relaxing $x_{jk}$ to take any real number in the interval $[0,1]$, we obtain the following LP relaxation to \eqref{opt}:
\begin{eqnarray} \label{rlp}
	\max \sum_{jk} x_{jk} \frac{\gamma_{jk}}{R_j}
\end{eqnarray}
subject to, 
\begin{eqnarray}
	\sum_{j} x_{jk} &\leq&  1 , \hspace{10pt} \forall k, \label{constr1}\\
	\sum_{jk} \gamma_{jk} x_{jk} &\leq&  C,  \label{cap_constr}\\
	\bm{x} &\geq& 0.
\end{eqnarray}
Call the above relaxed Linear Program $\textsc{RLP}$. The following Theorem shows that an optimal solution to {\sc RLP} is ``close'' to being an all-integral solution.  
\begin{framed}
\begin{theorem} \label{th1}
An optimal solution to $\textsc{RLP}$ allocates every RB to at most one user, excepting, possibly \emph{at most} one RB (which is shared between two users). 	
\end{theorem}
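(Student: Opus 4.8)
The plan is to prove the structural claim not for an arbitrary optimum but for an optimal \emph{vertex} (basic feasible solution) of \textsc{RLP}. First I would observe that the feasible region is a bounded polytope, since $x_{jk}\ge 0$ together with $\sum_j x_{jk}\le 1$ forces $x_{jk}\in[0,1]$; hence the linear objective \eqref{rlp} attains its maximum at some vertex $\bm{x}^{\star}$, and it suffices to show that $\bm{x}^{\star}$ has the claimed form. For each RB $k$ I introduce the support $S_k=\{j:x^{\star}_{jk}>0\}$, write $P=\sum_k |S_k|$ for the total number of strictly positive variables, and let $T$ be the set of RBs whose constraint \eqref{constr1} is tight at $\bm{x}^{\star}$.

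The core of the argument is the standard vertex characterization: the tight constraints at $\bm{x}^{\star}$ must have rank equal to the ambient dimension $nK$. The variables that are zero contribute exactly that many linearly independent tight non-negativity rows, spanning the corresponding coordinate subspace, so the tight \emph{structural} constraints—the RB constraints \eqref{constr1} for $k\in T$ together with the capacity constraint \eqref{cap_constr} when it is tight—must contribute rank $P$ on the remaining (positive) coordinates. I would then compute this structural rank directly. Because each RB constraint \eqref{constr1} involves only the variables of a single RB, the $|T|$ tight RB rows have pairwise disjoint supports, are linearly independent, and contribute rank exactly $|T|$; the lone capacity row \eqref{cap_constr} can raise the rank by at most one. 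Therefore $P\le |T|+1$. (Equivalently, passing to slack form this is just ``number of positive variables $\le$ number of equality constraints $=K+1$,'' sharpened by noting that every $k\notin T$ carries a positive slack.)

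It remains to convert the aggregate bound $P\le|T|+1$ into the per-RB statement. Note that $|S_k|\ge 1$ for every $k\in T$, since a tight constraint $\sum_j x^{\star}_{jk}=1$ cannot have empty support. Setting $a=\sum_{k\in T}(|S_k|-1)$ and $b=\sum_{k\notin T}|S_k|$, the inequality reads $a+b\le 1$ with $a,b$ nonnegative integers. Hence at most one of the quantities $|S_k|-1$ can be nonzero and it can equal at most one: either exactly one RB has $|S_k|=2$ with all others satisfying $|S_k|\le 1$, or every RB satisfies $|S_k|\le 1$. In particular no RB supports three or more users, which is precisely the assertion of the theorem.

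I expect the main obstacle to be the rank computation in the middle step. A naive count yields only the weaker bound $P\le K+1$, which would permit arbitrarily many users to share a single RB with tiny fractional allocations. The decisive point is the \emph{block-diagonal} structure of the RB constraints: each tight RB constraint removes exactly one degree of freedom from its own RB and couples to nothing else, so the single capacity constraint is the only mechanism that links distinct RBs and can therefore manufacture at most one shared RB. Making this degrees-of-freedom bookkeeping rigorous—justifying that the tight non-negativity rows and the projected structural rows add independently to the rank—is where the care is needed.
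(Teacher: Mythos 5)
Your proof is correct and is essentially the paper's own argument: both reduce the claim to the fact that an optimal basic feasible solution of \textsc{RLP} has at most $\kappa+1$ strictly positive variables while each RB constraint must carry at least one, and then finish by counting the excess. The paper phrases this via slack variables and the standard ``at most $m$ positive variables in a BFS'' bound plus pigeonhole, whereas you phrase it via the rank of the active constraints at a vertex and the disjoint supports of the RB rows --- as you note yourself, this is the same bookkeeping in slack-free form.
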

\end{framed}
\begin{IEEEproof}
By introducing the non-negative auxiliary variables $\zeta_k, \forall k$ in \eqref{constr1} and $\xi$ in \eqref{cap_constr}, we obtain the following set of equivalent constraints:

\begin{eqnarray} 
	\sum_j x_{jk} + \zeta_k &=&1,  ~~\forall k, \label{eq_constr1}\\
	\sum_{jk} \gamma_{jk} x_{jk} + \xi &=&  C,  \label{eq_constr2}
\end{eqnarray}
where $\bm{x}, \bm{\zeta}, \xi \geq 0$. 
Next, recall that. for any LP, an optimal solution (also known as a \textbf{B}asic \textbf{F}easible \textbf{S}olution (\textbf{BFS})), is always obtained at some vertex of the polytope defined by the constraints \cite{papadimitriou1998combinatorial}. Let the total number of RBs be $\kappa$. Since there are a total of $(\kappa+1)$ equality constraints taken together in the equality constraints \eqref{eq_constr1}-\eqref{eq_constr2}, at most $(\kappa+1)$ variables could be strictly positive in any {\bf BFS}. Also, since the RHS of the constraints in \eqref{eq_constr1} are positive, it follows that there is \emph{at least one} strictly positive variable per equality constraints \eqref{eq_constr1}. This implies, by the pigeonhole principle, that, in an optimal solution to {\sc RLP}, there could be at most one RB $k_1$ which has been allocated to two users $j_1, j_2, (j_1\neq j_2)$ (i.e., $x_{j_1k_1}x_{j_2k_1}>0$). Moreover, all other RBs have been allocated to at most one user in the optimal solution. 
\end{IEEEproof}

Note that, if the optimal solution to {\sc RLP} contains no fractional variable, then it indeed yields an \textbf{optimal solution} to {\sc Single Shot}. Finally, we use Theorem \eqref{th1} to construct a $2$-approximation algorithm for {\sc Single Shot}. 
\subsubsection*{\hspace{-20pt} $\square$ LP-based Polynomial-time $2$-Approximation algorithm}
The (possible) non-integral optimal solution to {\sc RLP} may be converted to a feasible $2$-approximate optimal solution to {\sc Single Shot}. Let the value of the optimal solution to {\sc RLP} and the original {\sc Single Shot} problem be denoted by \textsf{OPT}\' ~and \textsf{OPT} respectively.  
It is obvious that any feasible solution to {\sc Single Shot} may be used to easily construct a feasible solution to {\sc RLP} with the same objective value. Hence, we readily have 
\begin{eqnarray} \label{lb1}
	\textsf{OPT}'\geq \textsf{OPT}.
\end{eqnarray}

Next, let the contribution to the total objective value \textsf{OPT}\' ~in Eqn. \eqref{rlp} by the standalone RBs ($x_{jk}=1$ for some $j$), and the (possible) one shared RB be $I,F'$ respectively, where $\textsf{OPT}'=I+F' $. The maximum contribution to the objective \eqref{opt} in the {\sc Single Shot} problem that we can obtain from any single RB, considering it separately, is $F_{\max}=\max_{j,k} \frac{1}{R_j}\min \{\gamma_{jk}, C\}$. Clearly, $F_{\max}\geq F'$. 
Finally, we choose the solution corresponding to the \emph{maximum} of $I$ and $F_{\max}$. It is clearly a feasible solution to {\sc Single Shot} and 
\begin{eqnarray}
	\max\{I, F_{\max}\} &\geq& \max\{I, F'\} \nonumber \\
	&\geq& \frac{1}{2} (I+F') \nonumber \\
	&=&\frac{1}{2}\textsf{OPT}'\geq \frac{1}{2}\textsf{OPT}. \label{approx2}
\end{eqnarray}
Eqn. \eqref{approx2} shows that the above LP-based scheme is a $2$-approximate poly-time algorithm to the {\sc Single Shot} problem with an overall PON capacity constraint. We summarize the above algorithm in Algorithm \ref{approx} below:

\begin{algorithm}
\caption{LP-based $2$-Approximation Algorithm for {\sc Single Shot}}
\label{approx}
\begin{algorithmic}[1]
\STATE Find the maximum possible objective value \eqref{opt} obtainable by using a \emph{single} RB, i.e., 
\begin{equation}
	F_{\max}=\max_{j,k} \frac{1}{R_j}\min \{\gamma_{jk}, C\}.
\end{equation}
\STATE Solve the Linear Program \textsc{RLP} \eqref{rlp}. Let $I$ be the objective value obtained by the standalone RBs (\emph{i.e.,} for which $x_{jk}=1$ for some $j$) in its optimal solution.   
\STATE Choose the solution corresponding to the maximum of $I$ and $F_{\max}$. 
\end{algorithmic}
\end{algorithm}

Although the algorithm {\sc Rounding Ad} has been shown to be a $2$-approximation algorithm, our numerical simulations in Section \ref{simulation} reveal that {\sc Rounding Ad} is likely to perform near-optimally in practice. In the following, we design a Dynamic Programming-based \emph{Optimal} algorithm  to {\sc Single Shot}. However, unlike the previous LP-based $2$-approximation algorithm, the Dynamic Program runs in pseudo-polynomial-time, and hence, it is less efficient than {\sc Rounding Ad}.  
\subsection{A DP-based Pseudo-Polynomial-time Optimal Algorithm}
Without any loss of generality, we may assume that the capacity $C$ and the wireless interface rates $\{\gamma_{jk}\}$ are integers. Then, Theorem \eqref{ad} readily implies that the optimal allocated rates $y_{jk}$'s are also integers for all $j,k$. Hence, without any loss of optimality, we may consider the following discrete range $\mathcal{R}_{jk}$ for the decision variable $y_{jk}$:
\begin{eqnarray} \label{range}
	\mathcal{R}_{jk} = \{0,1,2, \ldots, \gamma_{jk}\}, ~~\forall j,k.
	\end{eqnarray}
Let $\kappa$ denote the total number of RBs. Arrange the RBs in some order $(\textrm{RB}_1, \textrm{RB}_1, \ldots, \textrm{RB}_\kappa),$ and consider them adding to the solution one-by-one in this sequence. Let $V(M,k)$ denote the maximum objective value \eqref{opt} obtained by using only the first $k$ RBs with a PON of total capacity $M$. Then, as explained below, we have the following Dynamic Programming recursion for $V(M,k)$:
\begin{eqnarray}\label{dpr} 
	&& V(M, k)= \nonumber \\
	  && \hspace{-20pt} \max_j\max_{y_{jk} \in \mathcal{R}_{jk}, y_{jk} \leq M} \big(\frac{1}{R_j} y_{jk} + V(M-y_{jk}, k-1)\big), 
\end{eqnarray}
with $V(M,0)=0, \forall M$.

\emph{Optimality:} The above recursion \eqref{dpr} may be obtained as follows. Consider the $k$\textsuperscript{th} RB and suppose that it is assigned to the $j$\textsuperscript{th} user and allocated a rate of $y_{jk}$. Hence, $\textrm{RB}_k$ contributes a value of $\frac{y_{jk}}{R_j}$ towards the objective \eqref{opt}. With this assignment, we are left with the first $k-1$ RBs with a usable PON capacity of value $M-y_{jk}$, which, by the definition of $V(\cdot, \cdot)$, contributes a total value of $V(M-y_{jk}, k-1)$ to the objective \eqref{opt} in an optimal allocation. Hence, $V(M,k)$ is found by optimizing the total objective $\big(\frac{1}{R_j} y_{jk} + V(M-y_{jk}, k-1)\big)$ over the choice of user assignment $j$ and feasible rate allocation $y_{jk} \in \mathcal{R}_{jk}$ for the $k$\textsuperscript{th} RB. Moreover, since the total PON capacity is $M$, the variable $y_{jk}$ can only assume those values the feasible set $\mathcal{R}_{jk}$ which are at most $M$. This proves optimality of the DP recursion \eqref{dpr}. We summarize the DP algorithm below in Algorithm \ref{dp_code}.
\begin{algorithm}
\caption{Optimal Dynamic Program for {\sc Single Shot}}
\label{dp_code}
\begin{algorithmic}[1]
\STATE Set $V(M,0)\gets 0, ~~\forall M=0,1,2 \ldots, C$.
\STATE \FOR {$k=1,2, \ldots, \kappa$} 
\FOR {$M=0,1, \ldots, C$}
\STATE 
\begin{eqnarray*}
&& V(M, k)= \nonumber \\
	  && \hspace{-20pt} \max_j\max_{y_{jk} \in \mathcal{R}_{jk}, y_{jk} \leq M} \big(\frac{1}{R_j} y_{jk} + V(M-y_{jk}, k-1)\big).
	  \end{eqnarray*}
\ENDFOR
\ENDFOR
\STATE Return $V(C, \kappa)$.
\end{algorithmic}
\end{algorithm}

%
\tcb{
 \subsection{Pseudo-Polynomial-time algorithm {\sc DP-II}}
 We can obtain a different DP for \textsc{Discrete}, which is dual to the above DP in some sense.\\
 Let us denote the (ordered) set of RBs under consideration for the \textsf{RSS} problem by $S$, $|S|=K$. The maximum profit obtained by using a single RB is $p_{\max}=\max_{j,k} \frac{1}{R_j} \gamma_{jk}$. For each integer $p, 0 \leq p \leq K p_{\max}$, define  $C(k,p)$ to be the minimum amount of PON capacity needed to obtain a profit of $p$ by using only the first $k$ RBs of the set $S$. Naturally $C(k,p)$ is defined to be $+ \infty$ if the profit $p$ can not be obtained by using the first $k$ RBs in the set $S$. Then we have the following DP recursion on $C(\cdot,\cdot)$, 
 \begin{eqnarray*} \label{recur}
 	C(k,p)=\min_j\bigg(C(k-1, p-\frac{1}{R_j} \gamma_{jk})+\gamma_{jk}\bigg)
 \end{eqnarray*}
 Note that, in the above recursion the minimization is over all $j$'s such that $p\leq \frac{1}{R_j}\gamma_{jk}$. 
 Since we are provided with a PON capacity budget of $C$, the optimal solution to \textsf{RSS} is obtained by,
 \begin{eqnarray}
 	\max\{p: C(K,p) \leq C\}, 
 \end{eqnarray}
which can be obtained by a simple binary search on the last row of
$C(K, \cdot)$. Hence the algorithm works as follows:\\~\\
$\bullet$ For all $k,p$, compute $C(k,p)$ using the
  recursion~\ref{recur}.\\~\\
$\bullet$ If $j$ minimzes the value of
  (\ref{recur}) then the solution for $k$ RBs and profit $p$ 
  is given by allocating $\gamma_{jk}$ units of bandwidth to user $j$ on RB
  $k$ and then augmenting it with the solution for $C(k-1, p-\frac{1}{R_j} \gamma_{jk})+\gamma_{jk}$
  on the remaining RBs.\\~\\
$\bullet$ Use binary search on the profit values to find $\max\{p: C(K,p) \leq C\}$. 

}

\section{Approximation Algorithm for the General {\bf Single Shot} Problem}
\label{s:algo}

In this Section, we present an approximation algorithm for the general {\sc Single Shot} problem, in which both the overall PON capacity constraint \eqref{eq:tot}, as well as the
individual RU-specific capacity constraints \eqref{eq:base} are active. 
Our proposed algorithm
is a greedy $2$-approximation algorithm called {\sc Matroid} which is based on the theory
of optimizing a sub-modular function over a partition matroid \cite{fisher1978analysis}. 

\subsection*{Algorithm {\sc Matroid}}

For a feasible binary assignment vector $\bm{x}=(x_{ijk})$ with $\sum_j x_{ijk} \leq 1, \forall i,k$, define a corresponding set $S$ of RU-User-RB triples as follows:
\begin{equation*}
	S=\{ (i,j,k), ~~\textrm{if} ~~x_{ijk}=1\}.
\end{equation*}
Define the ground set $E=\{(i,j,k), \forall i,j,k\}$, and let $\mathcal{I}$ be the collection of all corresponding sets $S$ for all feasible binary assignment vectors $\bm{x}$. We make the following claim:

\begin{lemma} \label{partition_lemma}
The system $(E, \mathcal{I})$ is a {\bf partition matroid}. 	
\end{lemma}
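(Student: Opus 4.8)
The plan is to exhibit an explicit partition of the ground set $E$ and to show that $\mathcal{I}$ coincides exactly with the independent sets of the associated partition matroid in which every block has capacity one. Recall that a partition matroid is specified by a partition $E = E_1 \bigcupdot \cdots \bigcupdot E_n$ of the ground set together with nonnegative integer capacities $d_1, \ldots, d_n$, and its independent sets are precisely those $S \subseteq E$ satisfying $|S \cap E_\ell| \le d_\ell$ for every $\ell$. Such families are well known to satisfy the matroid axioms, so it suffices to realize $(E,\mathcal{I})$ in this form.

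First I would partition the ground set according to the (RU, RB) pair: for each $i$ and each $k$, set
\[ E_{ik} = \{(i,j,k) : j\}, \]
the collection of all triples with the given RU index $i$ and RB index $k$, ranging over the users $j$ of $\textrm{RU}_i$. Since every triple $(i,j,k) \in E$ lies in exactly one block $E_{ik}$, namely the one determined by its first and third coordinates, the blocks are pairwise disjoint and their union is all of $E$; hence $E = \bigcupdot_{i,k} E_{ik}$ is a genuine partition. I then assign the capacity $d_{ik} = 1$ to every block.

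The key step is to match $\mathcal{I}$ with the independent sets of this partition matroid. Given a feasible binary assignment $\bm{x}$ and its associated set $S = \{(i,j,k) : x_{ijk}=1\}$, the number of triples of $S$ inside a block $E_{ik}$ equals $\sum_j x_{ijk}$, so the uniqueness constraint \eqref{eq:uniq} is equivalent to $|S \cap E_{ik}| \le 1$ for all $i,k$. Conversely, any $S \subseteq E$ obeying $|S \cap E_{ik}| \le 1$ for all $i,k$ arises from the feasible assignment $x_{ijk} = \mathbf{1}[(i,j,k)\in S]$, which satisfies \eqref{eq:uniq}. Thus $\mathcal{I}$ is precisely the collection of subsets respecting the unit per-block caps, i.e.\ the independent sets of the partition matroid with blocks $\{E_{ik}\}$ and all capacities one, which proves the claim.

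There is little genuine difficulty here: the content lies entirely in identifying the correct partition (by the $(i,k)$ pair rather than, say, by $i$ alone) and in the observation that feasible assignment vectors and cap-respecting subsets are in bijection. The only point that warrants care is verifying that $\mathcal{I}$ equals the \emph{full} independence family of the partition matroid and not merely a subfamily---in particular, that it is downward closed---which follows immediately from the ``conversely'' direction above, since deleting triples from a cap-respecting $S$ keeps it cap-respecting and hence keeps it in $\mathcal{I}$.
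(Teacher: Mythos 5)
Your proof is correct and follows essentially the same route as the paper: both partition the ground set $E$ into blocks $E_{ik}=\{(i,j,k):j\}$ indexed by the (RU, RB) pair, assign capacity one to each block, and identify $\mathcal{I}$ with the independent sets of the resulting partition matroid. Your write-up is slightly more complete in that you explicitly verify the converse inclusion (every cap-respecting subset comes from a feasible $\bm{x}$), a point the paper leaves implicit by citing a textbook example.
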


 
 Moreover, for a given RB-to-User assignment $S \in \mathcal{I}$, we can efficiently compute the optimal rate assignments
$y_{ijk}$ values by using Algorithm~\ref{findy}. Let $f(S)$ be the
associated objective for the assignment $S$. 
We have the following lemma:
\begin{lemma} \label{submod_lemma}
	The set function $f(\cdot)$ is submodular.
\end{lemma}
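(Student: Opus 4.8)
The plan is to recognize $f$ as the value of a maximum-weight linear optimization over a \emph{polymatroid}, and then to read off submodularity from the greedy (Edmonds) formula for that value. First I would make the combinatorial structure explicit. For a triple $e=(i,j,k)$ write $w_e = 1/R_{ij}$ and $\gamma_e = \gamma_{ijk}$. By Lemma~\ref{l:structure}, for a feasible assignment $S\in\mathcal I$ the value $f(S)$ is exactly the optimum of the linear program
\begin{equation*}
 f(S) = \max\Big\{\textstyle\sum_{e} w_e y_e \;:\; y\in P,\; y_e = 0 \ \forall e\notin S\Big\},
\end{equation*}
where $P$ is the polytope cut out by $0\le y_e\le\gamma_e$, the per-RU constraints $\sum_{e\in\mathrm{RU}_i} y_e\le C_i$, and the global constraint $\sum_e y_e\le C$. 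The crucial observation is that these upper bounds are sums over a \emph{laminar} family of subsets of the ground set $E$ (the singletons $\{e\}$, the RU-groups $\mathrm{RU}_i$, and $E$ itself), so $P$ is the independence polytope of a polymatroid whose rank function $g(T)=\max\{\sum_{e\in T} y_e : y\in P\}$ is submodular.

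Next I would identify Algorithm~\ref{findy} with Edmonds' greedy rule for this polymatroid: processing triples in decreasing order of $w_e$ and filling each to $\min\{\gamma_e,\,C-(\text{global used}),\,C_i-(\text{RU used})\}$ is precisely the assignment $y_{e_t}=g(S_t)-g(S_{t-1})$, where $S_t$ collects the $t$ heaviest elements of $S$. This yields the closed form $f(S)=\sum_t w_{e_t}\big(g(S_t)-g(S_{t-1})\big)$, and an Abel (summation-by-parts) rearrangement turns it into the layer-cake representation
\begin{equation*}
 f(S)=\int_0^{\infty} g\big(S\cap E_{\ge\theta}\big)\,d\theta,
\end{equation*}
where $E_{\ge\theta}=\{e: w_e\ge\theta\}$. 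All weights are strictly positive because $R_{ij}>0$, so no truncation of nonpositive weights is needed.

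Finally I would conclude submodularity from this representation. For each fixed threshold $\theta$, the set function $S\mapsto g(S\cap E_{\ge\theta})$ is submodular: $g$ is submodular by the polymatroid structure, and intersecting the argument with a fixed set $F=E_{\ge\theta}$ preserves submodularity, since $(A\cap F)\cup(B\cap F)=(A\cup B)\cap F$ and $(A\cap F)\cap(B\cap F)=(A\cap B)\cap F$, so applying submodularity of $g$ to $A\cap F$ and $B\cap F$ gives the required inequality. An integral, being a nonnegative combination, of submodular functions is again submodular, hence $f$ is submodular. The same formula incidentally yields monotonicity, since $g$ is monotone and $\theta\mapsto S\cap E_{\ge\theta}$ is monotone in $S$, which is exactly what is needed to later invoke the matroid-constrained greedy of \cite{fisher1978analysis}.

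The main obstacle I anticipate is the first step: verifying cleanly that $P$ is a polymatroid and that Algorithm~\ref{findy} coincides with Edmonds' greedy, i.e.\ that $g(S_t)-g(S_{t-1})$ genuinely equals the $\min$ taken in Algorithm~\ref{findy}. This is precisely where the laminar (nested) nature of the capacity constraints is essential, as it is what makes the greedy exact and the rank function submodular. If one prefers to avoid the polymatroid machinery, the same conclusion can be reached by a direct exchange argument comparing the runs of Algorithm~\ref{findy} on $A\cup\{e\}$, $A$, $B\cup\{e\}$, and $B$ for $A\subseteq B$, tracking how the inserted triple $e$ displaces lower-priority allocations; but that argument is considerably more delicate to make rigorous because of the cascading reallocation across the two nested capacity budgets, which the polymatroid viewpoint handles automatically.
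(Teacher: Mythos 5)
Your proof is correct, but it takes a genuinely different route from the paper's. The paper argues the four-point inequality $f(S\cup\{j\})+f(S\cup\{k\})\ge f(S\cup\{j,k\})+f(S)$ directly, by comparing the rates $r_j,x_k,y_j,y_k$ that Algorithm~\ref{findy} assigns to the two added RBs in the four runs, noting $r_j\ge y_j$ and $x_k\ge y_k$, and multiplying by ``marginal utilities'' $\delta_j,\delta_k$ interpreted as LP reduced costs; this is an exchange-style argument whose rigor hinges on the unproven claim that each marginal gain factors exactly as (rate)$\times$(a fixed per-RB coefficient), which is delicate precisely because inserting a high-priority triple cascades displacements through the two nested budgets. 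Your approach instead identifies the capacity system ($y_e\le\gamma_e$, per-RU caps $C_i$, global cap $C$) as a laminar family, hence a polymatroid with monotone submodular rank $g$, recognizes Algorithm~\ref{findy} as Edmonds' greedy (so $f(S)=\sum_t w_{e_t}(g(S_t)-g(S_{t-1}))$), and converts this by Abel summation into the layer-cake form $f(S)=\int_0^\infty g(S\cap E_{\ge\theta})\,d\theta$, a nonnegative combination of restrictions of $g$ and therefore submodular. What this buys is twofold: it replaces the informal reduced-cost step with standard polymatroid facts (the one point you must still verify carefully --- that the saturating $\min$ in Algorithm~\ref{findy} equals $g(S_t)-g(S_{t-1})$ --- is exactly where laminarity is used, e.g.\ via the explicit formula $g(T)=\min\bigl(C,\sum_i\min(C_i,\gamma(T\cap \mathrm{RU}_i))\bigr)$), and it delivers monotonicity of $f$ for free, which the Fisher--Nemhauser--Wolsey guarantee \cite{fisher1978analysis} invoked later actually requires but the paper never establishes. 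The cost is reliance on polymatroid machinery that the paper avoids.
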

\begin{proof}
See Appendix \ref{submod_proof}.	
\end{proof}

The greedy algorithm works by initializing 
$S$ to a null set and then repeatedly choosing a feasible augmentation $\bar{S}$ of $S$
 that maintains the feasibility constraint $\sum_j x_{ijk}\le 1$ and which
maximizes the increase in $f(S)$. The algorithm is summarized in Algorithm \ref{greedy}.


\begin{algorithm}
\caption{Greedy Algorithm for {\sc Single Shot}}
\label{greedy}
\begin{algorithmic}[1]
\STATE $S\gets \phi$
\WHILE {1}
\STATE Find a feasible augmentation $\bar{S} \in \mathcal{I}$ of $S$ that maximizes $f(\bar{S})$ subject to the constraint
  $|\bar{S} \setminus S|=1$. 
  \IF {$f(\bar{S})=f(S)$}
  \STATE \textbf{break}
  \ELSE
  \STATE $S\gets \bar{S}$
  \ENDIF
  \ENDWHILE
\end{algorithmic}
\end{algorithm}


\begin{lemma}
 Algorithm \ref{greedy} is a $2$-approximation algorithm for {\sc Single Shot}. 
\end{lemma}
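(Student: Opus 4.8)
The plan is to recognize the statement as a direct instance of maximizing a monotone submodular function subject to a single matroid constraint, and then to invoke the classical worst-case guarantee of the greedy heuristic due to Fisher, Nemhauser, and Wolsey~\cite{fisher1978analysis}. The two preceding lemmas already supply the structural ingredients this requires: Lemma~\ref{partition_lemma} establishes that $(E,\mathcal{I})$ is a partition matroid, so the feasibility constraint $\sum_j x_{ijk}\le 1$ is exactly the independence constraint of a matroid, and Lemma~\ref{submod_lemma} establishes that the objective $f(\cdot)$ is submodular. What remains is to check the one hypothesis of the Fisher--Nemhauser--Wolsey theorem that has not yet been made explicit, namely that $f$ is monotone non-decreasing, and then to confirm that Algorithm~\ref{greedy} is precisely the greedy heuristic analyzed there.

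First I would verify monotonicity and normalization of $f$. Since $f(S)$ is defined as the optimal objective~\eqref{opt} attainable for the assignment $S$ via Algorithm~\ref{findy}, and since enlarging $S$ only enlarges the set of feasible rate allocations (any allocation feasible for $S$ remains feasible for any $S'\supseteq S$ by setting the rates on the extra triples to zero), we have $f(S)\le f(S')$ whenever $S\subseteq S'$. Together with $f(\emptyset)=0$, this shows $f$ is a normalized monotone submodular function, so all hypotheses of the theorem hold.

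With these facts in hand, I would apply the Fisher--Nemhauser--Wolsey bound, which states that the greedy heuristic for maximizing a monotone submodular function over a matroid returns a solution of value at least $\tfrac12$ of the optimum. Since Algorithm~\ref{greedy} initializes $S=\emptyset$ and, at each step, augments $S$ by the single element (respecting the partition-matroid independence constraint $\sum_j x_{ijk}\le 1$) that maximizes the marginal increase in $f$, it coincides with that greedy heuristic. The stopping rule $f(\bar S)=f(S)$ merely halts once every feasible single-element augmentation has zero marginal gain, which by monotonicity forfeits nothing. Hence the value returned by Algorithm~\ref{greedy} is at least $\tfrac12\,\textsf{OPT}$, i.e.\ it is a $2$-approximation.

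The main obstacle I anticipate is not the invocation of the theorem itself but the careful confirmation of monotonicity and of the match between the implemented stopping criterion and the textbook greedy: one must argue that terminating when the best marginal gain vanishes does not cost the guarantee, and that the optimum of {\sc Single Shot} is indeed $\max_{S\in\mathcal{I}} f(S)$, so that the approximation is measured against the correct benchmark. Both points follow from monotonicity of $f$ and from Lemma~\ref{l:structure}, which guarantees that $f(S)$ is the true optimal objective for the assignment $S$; I would make these connections explicit rather than leave them implicit.
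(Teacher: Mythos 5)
Your proof takes essentially the same route as the paper, which likewise disposes of the lemma by directly invoking the Fisher--Nemhauser--Wolsey guarantee for greedy maximization of a submodular function over a matroid, with Lemmas~\ref{partition_lemma} and~\ref{submod_lemma} supplying the matroid and submodularity hypotheses. Your additional verification of monotonicity and of the stopping rule is a useful tightening of hypotheses the paper leaves implicit, but it does not change the argument.
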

\begin{IEEEproof}
This is a direct result of the Fisher-Nemhauser-Wolsey~\cite{fisher1978analysis} algorithm for
maximizing a submodular function over a matroid. 
\end{IEEEproof}
\emph{Discussion:} Although Algorithm \ref{greedy} also applies to the case when there is an overall PON capacity constraint (Section \ref{s:singleC}), the LP-based Algorithm \ref{approx} {\sc Rounding Ad} runs much faster than the Matroid-based Greedy Algorithm \ref{greedy}. This is because, each candidate augmentation in the  step 3 of Algorithm \ref{greedy} requires an invocation of Algorithm \ref{findy} for the evaluation of $f(\bar{S})$, which is costly.   

\tcb{
\subsection{Fast solution to the LP}

\tcb{(Does the following apply also for the LP in Section \ref{s:singleC}?)}

We now describe a simple iterative algorithm for solving the LP. (We
shall drop the dependence on $t$ for ease of notation.) The
algorithm has a parameter $\ve$ and maintains variables $X_{ijk},
u_{ik}, v_i, w$. Initially $X_{ijk}=0$ and $u_{ik}=1$, $v_i=1/C_i$ and $w=1/C$ for all
$ijk$. Each iteration works as follows. We repeat for as many
iterations as are feasible. 
\begin{itemize}
\item Let
  $i'j'k'=\arg\min_{ijk}\{R_{ij}(u_{ik}+\gamma_{ijk}v_i+\gamma_{ijk}w)\}$.
\item Increase $X_{i'j'k'}$ by $1$. 
\item Set $u_{i'k'}\leftarrow u_{i'k'}(1+\ve)$, $v_{i'}\leftarrow
  v_{i'}(1+\ve\gamma_{i'j'k'}/C_{i'})$, $w\leftarrow
  w(1+\ve\gamma_{i'j'k'}/C)$. 
\item Let 
\begin{eqnarray*}
\alpha=\max\{&&\max_{ik}\sum_j X_{ijk},
  \\ &&\max_{i}\sum_{jk}\gamma_{ijk}X_{ijk}/C_i,
  \sum_{ijk}\gamma_{ijk}X_{ijk}/C\}.
\end{eqnarray*}
Set $x_{ijk}=X_{ijk}/\alpha$ for
  all $ijk$.  
\end{itemize}

\tcb{
\subsection{Integrality gap}

We conclude this section by showing that the optimal solution to the relaxed problem might 
lead to a higher value of the optimization objective compared to the original problem.

\begin{lemma}
The optimal solution to the relaxation of the {\sc single-shot} problem might
be fractional. Hence there is an integrality gap for the relaxation. 
\end{lemma}
\begin{IEEEproof}
Our example is extremely similar to the example that showed the
suboptimality of basic Proportional Fair. In particular, it has 1 RU with 2 users and four RBs. 
The $R$ values for the two
users are given by,
$$
R_{00}(t)=1~~~
R_{01}(t)=2. 
$$
The instantaneous channel rates are given by,
$$
\gamma_{00k}(t)=1~~~
\gamma_{01k}(t)=4~~~\forall k. 
$$
We consider the solution for a generic value of $C$. In particular, let
$N=\sum_k x_{01k}(t)$, i.e., $N$ is the number of RBs for
which the schedule gives service to user $1$. It is not hard to see
that we want $N$ to be as large as possible while respecting the
capacity constraint. Hence in the fractional solution, we set
$N=\max\{4,\frac{1}{3}(C-4)\}$ which is non-integral for any $C\le 16$
that is not a multiple of $4$. The value of the solution is
$2N+(4-N)=N+4$.  Hence if $N$ has to be rounded down to an integer,
then we clearly get a suboptimal solution. 
\end{IEEEproof}
}
}


\section{Heuristic Algorithms}
\label{s:prelim}

In this Section, we present two natural heuristics that provide a baseline for numerical comparison with our proposed optimal and approximation algorithms in the following Section.\\~\\
$\bullet$ {\sc max-yield:} This is the simplest adaptation of the traditional 
PF algorithm so that it respects the capacity constraints. The algorithm works by 
going through the RUs and RBs in decreasing order of the index $ \max_j \gamma_{ijk}/R_{ij}$ sequentially and always picking the user
$j$ that maximizes the index $\gamma_{ijk}/R_{ij}$. 
At each step, the used and remaining capacity on the PON is tracked, and the algorithm 
stops when the available capacity is exhausted.\\~\\
$\bullet$ {\sc max-value:} This algorithm works by going through
the RUs and RBs in decreasing order of the index $ \max_j \gamma_{ijk}/R_{ij}$ sequentially and always picking the user that maximizes $1/R_{ij}$.
At each step, the used and remaining capacity on the PON are tracked, and the algorithm stops when the available capacity is exhausted. 
Note that {\sc max-yield} tries to optimize the objective with respect to the wireless 
resources, and algorithm {\sc max-value} seeks to maximize the objective with respect to 
the PON capacity constraints.
\tcb{
$\bullet$ {\sc Rounding:} Algorithm {\sc Rounding} is based on the fractional relaxation of the {\sc Single Shot} problem, called {\sc Fractional Single Shot}, which can be solved via the iterative method of Garg and K\"onemann~\cite{garg2007faster}.  
The only difference between {\sc Single Shot} and {\sc Fractional Single Shot} is that the binary
constraint \eqref{eq:bin} of {\sc Single Shot} is replaced by the continuous
constraint $x_{ijk}\in [0,1]$ in {\sc Fractional Single Shot}. In other words, unlike {\sc Single Shot}, a RB can be split across multiple users in {\sc Fractional Single Shot}. 
 
Let $\hat{x}_{ijk}$ and $\hat{y}_{ijk}$ represent the optimal
solution to {\sc Fractional Single Shot}. We now show how to convert it
into a solution that satisfies the constraints of problem {\sc
  Single Shot}. The principle of randomized rounding is simple - for each
RU $i$ and RB $k$ we pick among the users $j$ with
a probability proportional to $\hat{x}_{ijk}$. We can treat these
values as probabilities since in the optimal solution, we have $\sum_j \hat{x}_{ijk}= 1$. The rate allocation is done as follows: 
\begin{eqnarray*}
y_{ijk}=\begin{cases}
0, ~~ \textrm{if} ~~ x_{ijk}=0\\
{\hat{y}_{ijk}}/{\hat{x}_{ijk}}, ~~ \textrm{if} ~~ x_{ijk}=1.	
\end{cases}
\end{eqnarray*}
%
%

By the linearity of expectations, the expected objective value of the rounded
solution is no worse than the objective value of the
fractional solution, which is, in turn, no worse than the optimal integral solution. However, rounding may lead to violation of some of  the constraints. In practice, we can accommodate this discrepancy by utilizing slightly smaller values of $C$ and $C_i$ when solving the LP. 
}


\section{Simulation Results}\label{simulation}
\emph{Set-up:} We numerically simulate the performance of the proposed scheduling algorithms over a
service area of $1$ sq.\ km serving $1000$ wireless users via
$100$ RUs with overall one PON capacity constraint. We experiment with two different types of mid-hauls - one with a PON transport capacity of $C=1$ Gbps and another with a PON capacity of $C=1000$ Gbps. In the former case, the PON capacity is highly constraining, whereas in the latter case the PON capacity is hardly constraining. 
The users and the RUs are assumed to be distributed over the
service area according to a two-dimensional Poisson Point Process. The wireless channel has a bandwidth of $20$ MHz, and all RUs
have omnidirectional antennas with transmit power of $24$ \textrm{dBm}. The path-loss
coefficient is $\alpha_{\textrm{LOS}}=2.09$ for a line-of-sight transmit/receiver pair and
$\alpha_{\textrm{NLOS}}=3.75$ for a non-line-of-sight pair. The probability for a
transmit/receiver pair to be line-of-sight is $p_{\textrm{LOS}}=0.12$, if their separation
is less than $200$m and is zero otherwise. Each wireless channel has
a been simulated with a random fading process using Jakes' model \cite{wc1974microwave} with a maximum doppler shift of 
$10$ Hz. The wireless fading, in turn, determines the
air-interface rate-vector $\bm{\gamma}(t)$.\\

\begin{figure}[htb]
	\centering
		\hspace{-10pt}
		\begin{overpic}[width=0.35\textwidth]{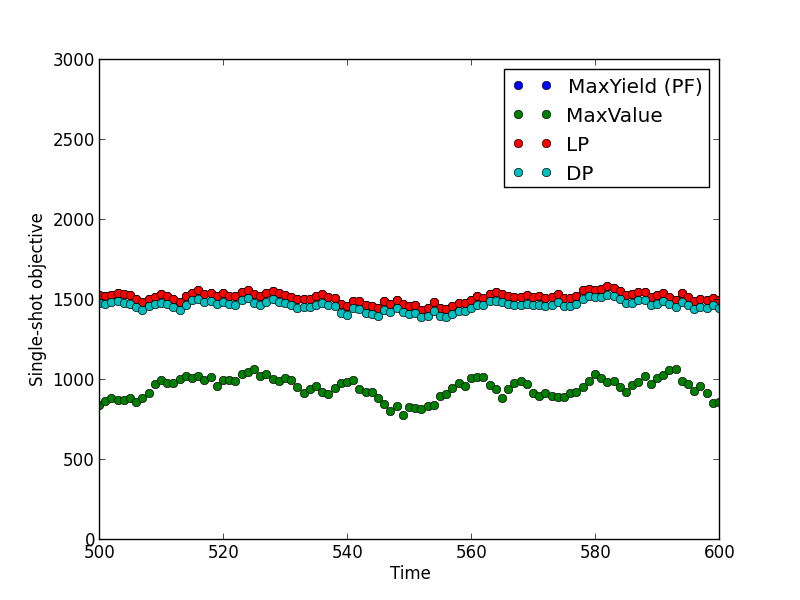}
		\end{overpic}
		\caption{{\sc Single Shot} objective ($C=1000$ Gbps)}
		\label{f:single-shot-1e12}
		\end{figure}
\begin{figure}[htb]
	\centering
		\hspace{-10pt}
		\begin{overpic}[width=0.35\textwidth]{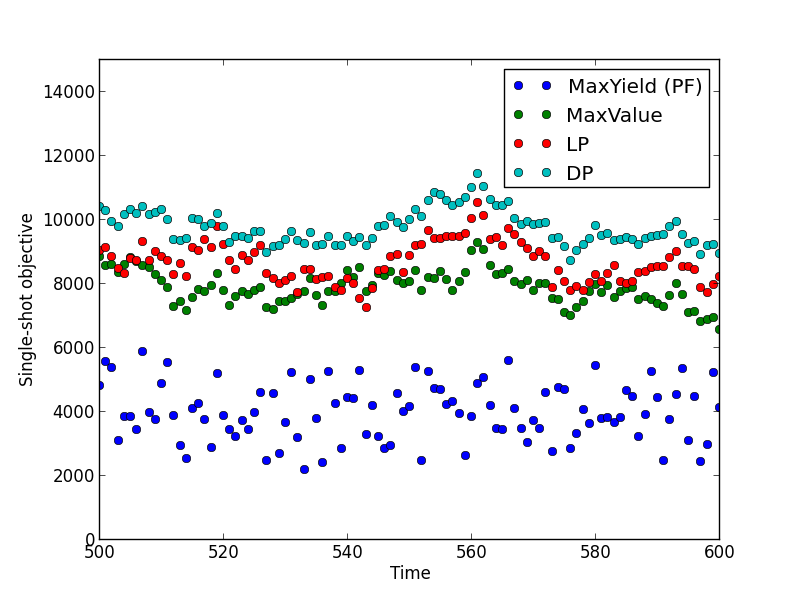}
		\end{overpic}
		\caption{{\sc Single Shot} objective ($C=1$ Gbps)}
		\label{f:single-shot-1e9}
		\end{figure}
\begin{figure}[htb]
	\centering
		\hspace{-10pt}
		\begin{overpic}[width=0.35\textwidth]{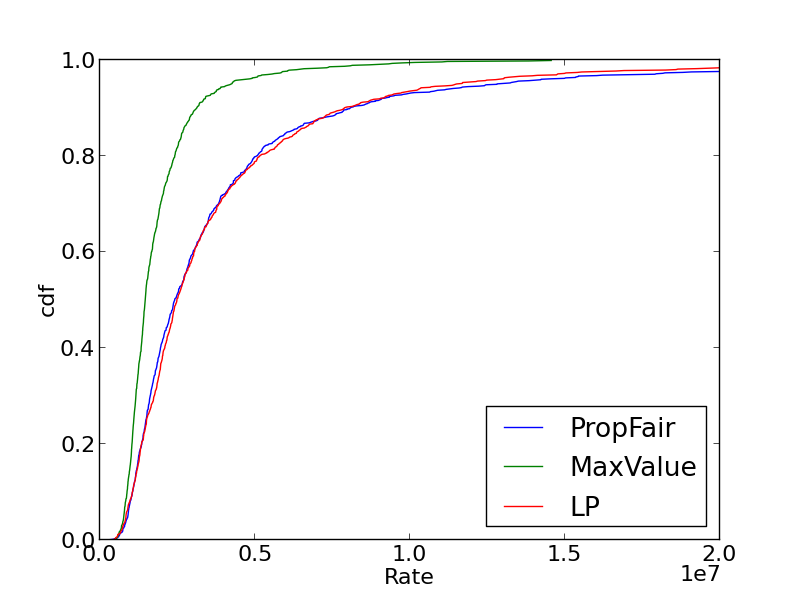}
		\end{overpic}
		\caption{Long-term user rate distribution ($C=1000$ Gbps)}
		\label{f:multi-shot-1e12}
		\end{figure}
\begin{figure}[htb]
	\centering
		\hspace{-10pt}
		\begin{overpic}[width=0.38\textwidth]{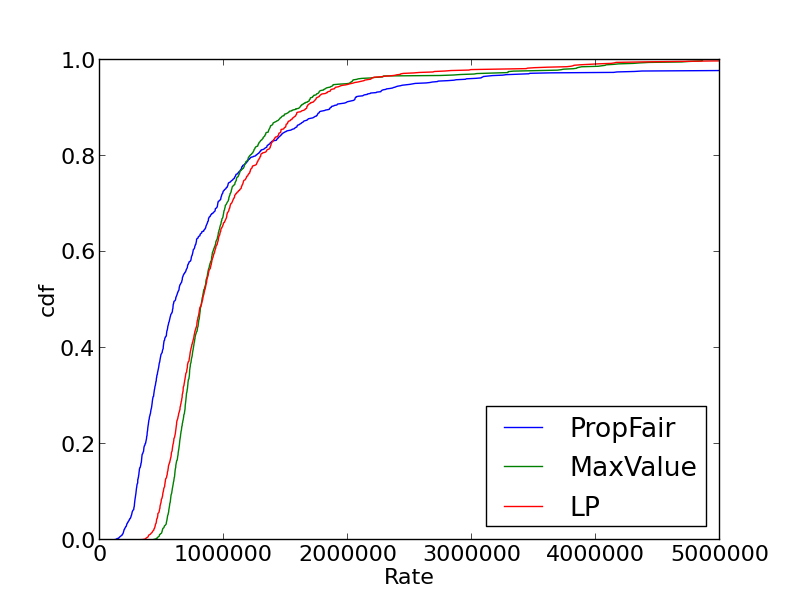}
		\end{overpic}
		\caption{Long-term user rate distribution ($C=1$ Gbps)}
		\label{f:multi-shot-1e9}
		\end{figure}

\emph{Results and Discussion:} In Figures~\ref{f:single-shot-1e12} and \ref{f:single-shot-1e9}, we
show the {\sc Single Shot} objective achieved by {\sc max-yield}, {\sc max-value}, the Dynamic Programming
algorithm DP, and the LP-based $2$-approximation algorithm {\sc Rounding-Ad}. We plot
these values over 100 time slots (after a warm-up period). To fairly compare the efficacy of the above algorithms one a slot-by-slot basis.  In our simulations, we assume that all algorithms use the same set of $R_{ij}(t)$ values (that are commonly driven by the {\sc max-yield}
algorithm). 

For the case $C=1000$ Gbps, the PON capacity is not constraining and hence, the
{\sc max-yield} algorithm is optimal. In this case, we observe from
Figure~\ref{f:single-shot-1e12} that the DP and LP-based
approaches are essentially optimal as well. (The points for LP
coincide with (and therefore hide) the points for {\sc max-yield}.) On the other hand, Figure \ref{f:single-shot-1e9} shows that for the case of $C=1$ Gbps, a pure
Proportional Fair approach would violate the PON capacity constraint, and hence,
{\sc max-yield} is not optimal. Moreover, {\sc max-value} is also not
optimal either since it would not necessarily fill up the PON capacity. We see
that both the DP and LP-based algorithms perform better than the heuristics.
The fact that DP and LP-based algorithms work well
in both cases illustrates the benefits of scheduling with an awareness
of both the channel conditions and the PON capacity. 

In Figures~\ref{f:multi-shot-1e12} and \ref{f:multi-shot-1e9} we
observe the overall user rate distribution when we run Proportional
Fair ({\sc
  max-yield}), {\sc max-value} and the LP-based $2$-approximation algorithm {\sc Rounding Ad}. If $C=1000$ Gbps,
then {\sc Rounding Ad} has a similar rate distribution to the optimal
Proportional Fair algorithm. If $C=1$ Gbps, then {\sc Rounding Ad} outperforms
Proportional Fair ({\sc
  max-yield}) and {\sc max-value} algorithms for all except the users with the
highest channel rates. This, in turn, means that {\sc Rounding Ad} leads
to a higher value of the logarithmic utility function, which rewards
fairness. 

		\label{cdf-dist}

\section{Discussion and Related Work} \label{discussions}
In this paper, we have analyzed a scheduling problem that arises in the context of virtualized RAN architecture with a fixed capacity PON mid-haul. The importance of this type of scheduling problem is on the rise given the shift towards more flexible split-processing architecture for 5G wireless networks. 
We view our work as a natural extension of the large body literature on
scheduling over time-varying channels \cite{agrawal2002optimality,andrews2005optimal,borst2001dynamic,borst2003user,liu2001opportunistic,liu2003framework,shakkottai2001scheduling,shakkottai2002scheduling,tassiulas1992stability}. This body of
research introduced and influenced the popular Proportional Fair (PF) algorithm \cite{viswanath2002opportunistic,jalali2000data}, which is
implemented in almost all of today's cellular network.  Our
algorithms lead to a different and more efficient scheduling than the usual PF algorithm when the
PON mid-haul capacity is the bottleneck. 


\bibliographystyle{IEEEtran}
\bibliography{pon}

\appendix 
\subsection{Proof of Lemma \ref{l:structure}}
\begin{IEEEproof} \label{structure:proof}
Let $\bm{y}^*$ denote the optimal solution for the assignment $\bm{x}$. Consider the $(ijk)$ triples in the order above and consider the first triple $(ijk)$
for which the $y_{ijk}(t)$ value according to the above algorithm is different
from $y^*_{ijk}(t)$.  Since the $y_{ijk}(t)$ have been made as large
as possible subject to all of the constraints, it must be the case
that $y^*_{ijk}(t)<y_{ijk}(t)$. We now increase in a continuous manner
until $y^*_{ijk}(t)=y_{ijk}(t)$. In order to do this, we might have to
decrease some other $y^*$ values.  It is not necessary to do this for
$i'j'k'$ triples that have already been considered since $y_{ijk}(t)$
does not violate any constraints.  If there is a value $y^*_{ij'k'}(t)>0$ for
a later triple $(ij'k')$ at the same RU then we decrease it until either it hits zero
or all the constraints are satisfied.  If there is no such
$y^*_{ij'k'}(t)$ at the same RU then we do the same but for a
later $y^*_{i'j'k'}$ value at a different RU. We can always
find such a value since otherwise $y^*_{i'j'k'}(t)=0$ for all later
triples. This cannot be true if the $y^*$ values satisfy the
constraints since the $y$ values represent a feasible solution. 

Since we are decreasing $y^*_{i'j'k'}(t)$ for a later triple it must
be the case that $1/R_{ijk}(t)\ge 1/R_{i'j'k'}$. Hence the objective
function for the $y^*$ values cannot get any worse as we make the
changes. If we keep repeating the procedure then eventually the $y^*$
values will equal the $y$ values. This implies that the $y$ values
found from the above procedure give us an optimal solution.
\end{IEEEproof}

\subsection{Alternative Proof of Theorem \ref{ad}} \label{alter}
\begin{IEEEproof}
Let $\bm{x}^*$ be an optimal RB assignment to the {\sc Single Shot} problem \eqref{opt}. We show that there exists an \textsc{Almost Discrete} feasible optimal allocation $\bm{y}^*$.	\\
 For each RB $k$, there exists exactly one user $j^*(k)$ such that $x^*_{j^*(k)k}=1$.  Denote $R_{j^*(k)k}=R_k$ and $\gamma_{j^*(k)k}=\gamma_k$. Then substituting this optimal $\bm{x^*}$ in \eqref{opt}, we note that the optimal allocation $\bm{y}$ is a solution of the following LP: 

\begin{eqnarray} \label{LPn}
	\max \sum_k y_k/R_k
\end{eqnarray}

s.t., 
\begin{eqnarray}
	&&\sum_k y_k \leq C \label{simplex1}\\
	&& 0 \leq y_k \leq \gamma_k, \hspace{10pt} \forall k.  \label{bd2}
\end{eqnarray}
It is easy to see that the feasible region of the LP, given by the Eqns \eqref{simplex1} and \eqref{bd2}, is bounded, and hence, a finite optimal solution to the LP \eqref{LPn} exists. 
Next, recall the fundamental result that the solution of an LP is always obtained at a vertex of the feasible region \cite{papadimitriou1998combinatorial}, which is also known as the \textbf{B}asic \textbf{F}easible \textbf{S}olution (BFS). \\
Let $\kappa$ be the number of RBs. Since there are $(\kappa+1)$ linearly independent inequalities in the constraints
\eqref{simplex1} and \eqref{bd2}, and the dimension of the vector
$\bm{y}$ is $\kappa$, it is clear that at least $(\kappa-1)$ of inequalities from \eqref{bd2} must be active in any {\bf BFS}. This proves
that there exists an optimal solution which is \textsc{Almost
  Discrete}. 

\end{IEEEproof}
\tcb{
\subsection{Proof of Theorem \ref{d2ad}}
\begin{IEEEproof}
From Theorem~\ref{ad}, we know that there is an optimal solution to
\eqref{opt} that is \textsc{Almost Discrete}. In such a solution, call a RB $k$ \textsf{Utilized} if $y_{jk}^*=\gamma_{jk}$, and \textsf{Under-Utilized} if $0< y_{jk} < \gamma_{jk}$ for some $j$ with $x^*_{jk}=1$. 
Then we have, 
\begin{eqnarray*}
\textsf{OPT}&=& \textsf{Utilized} \textrm{~RBs} + \textsf{Under-Utilized} \textrm{~RB}\\
&\leq & 2 \max \{ \textsf{Utilized} \textrm{~RBs},\textsf{Under-Utilized} \textrm{~RB}\} 
\end{eqnarray*}
Hence,
\begin{eqnarray} \label{fundamental_ineq}
	\max \{ \textsf{Utilized} \textrm{~RBs},\textsf{Under-Utilized} \textrm{~RB}\} \geq \frac{1}{2} \textsf{OPT}
\end{eqnarray}
Now consider optimizing each of the terms appearing on the LHS of \eqref{fundamental_ineq} \emph{separately}, and choosing the better of these two solutions. 

Maximizing the second term is easy, we just take the maximum over the
various RBs and allocate it to the full extent, i.e.,
\begin{eqnarray}
	k^*= \arg \max _{jk} \frac{1}{R_j} \min\{\gamma_{jk}, C\}.
\end{eqnarray}
An optimal solution to \textsc{Discrete} maximizes the first term. The proof now follows from Eqn. \eqref{fundamental_ineq}.
\end{IEEEproof}
}

\subsection{Proof of Lemma \ref{partition_lemma}} \label{partition_matroid}
\begin{proof}
Define the following disjoint partition of the base set $E=\bigcupdot_{i,k} E_{ik}$, where
\begin{equation*}
	E_{ik}=\{(i,j,k),\forall j\}.
\end{equation*}
	Since $\bm{x}$ is feasible, it follows that at most one user may be assigned to any RB at each
RU. Thus, for any $S \in \mathcal{I}$, we have $|S\cap E_{ik}|\leq 1, \forall i,k.$ Hence, the proof directly follows from Example 12.8 (p. 288) of \cite{papadimitriou1998combinatorial}. 
\end{proof}

\subsection{Proof of Lemma \ref{submod_lemma}}\label{submod_proof}
\begin{IEEEproof}
Let $S$ be a set of RBs and consider two RBs $j, k$ such that $j \in S^c$ and $k \in S^c$. Then, following \cite{fisher1978analysis}, the following inequality \eqref{submod_def} establishes submodularity of the function $f(S)$:
\begin{equation} \label{submod_def}
	f(S\cup \{j\})+ f(S\cup \{k\}) \geq f(S\cup \{j,k\})+ f(S).
\end{equation}
To show the above inequality, we exploit the result in Lemma \ref{l:structure}. Suppose, in the optimal solution corresponding to $f(S\cup j)$, the RB $j$ was allocated a rate of $r_j$. Similarly, in the optimal solution to $f(S\cup k)$, the RB $k$ was allocated a rate of $x_k$. And finally, in the optimal solution corresponding to $f(S\cup \{j,k\})$, the RBs $j$ and $k$ was allocated a rate of $y_j$ and $y_k$ respectively.  Then, it is clear from the Lemma \ref{l:structure} that $r_j \geq y_j$ and $x_k \geq y_k$. Then, we may write
\begin{eqnarray*}
f(S \cup \{j\})-f(S)&=& r_j \delta_j\\
f(S \cup \{k\})-f(S) &=& x_k \delta_k\\
f(S\cup \{j,k\}) - f(S)&=& y_j \delta_j + y_k \delta_k,	
\end{eqnarray*}
where $\delta_j $ and $\delta_k$ denote the marginal utility of adding the RBs to the set $S$ (this situation is equivalent to adding a new variable in the simplex method for solving the LP and the quantities $\delta_j$ and $\delta_k$ correspond to the reduced costs of the variables corresponding to the RBs $j$ and $k$ \cite{bertsimas1997introduction}). Using the above relations, we have 
\begin{eqnarray*}
&&\big(f(S \cup \{j\})-f(S)\big) + \big(f(S \cup \{k\})-f(S)\big) \\
&\geq & f(S\cup \{j,k\}) - f(S).
\end{eqnarray*}
Rearranging the above, we have 
\begin{eqnarray*}
	f(S\cup \{j\})+ f(S\cup \{k\}) \geq f(S\cup \{j,k\})+ f(S),
\end{eqnarray*}
and this establishes the submodularity property of the function $f(\cdot)$. 
\end{IEEEproof}

\tcb{
\subsection{Proof of Lemma \ref{submod_lemma}} \label{submod_proof}
\begin{IEEEproof}
Let the optimal value of the following LP be $f(S)$:
\begin{eqnarray} \label{lp2}
\max \sum_{j \in S} c_j x_j
\end{eqnarray}
s.t., 
\begin{eqnarray} \label{ks}
	\sum_{j} a_{ij} x_j &\leq& b_i, ~~\forall i,\\
	\bm{x} &\geq& \bm{0}, \nonumber 
\end{eqnarray}
where all the coefficients $\bm{a}, \bm{b},  \bm{c}$ are component wise non-negative (multiple Knapsack constraint). We claim that $f(S)$ is submodular. \\
Define $\rho_j(S)=f(S\cup \{j\})-f(S)$. From Proposition 2.1 (iii) of 
\cite{fisher1978analysis}, it suffices to show that for any subset $S$, we have 
\begin{equation}
	\rho_j(S) \geq \rho_j(S\cup \{k\}), ~~ \forall  j \in \big(S\cup \{k\}\big)^c.
\end{equation}
Let an optimal solution corresponding to the sets $S\cup \{j\},S, S\cup \{j,k\}$ and $S\cup\{k\}$ for the LP \eqref{lp2} be $\bm{w}, \bm{x}, \bm{y}$ and $\bm{z}$ respectively.  
Note that, due to the non-negative Knapsack constraints, we have $z_i \geq y_i, \forall i \in S\cup \{k\}$ and $w_j \geq y_j$. Now, 
\begin{eqnarray*}
&& f(S\cup \{j\})+ f(S\cup \{k\}) \\
&=& \sum_{i \in S \cup \{j\}} c_i w_i + \sum_{i \in S \cup \{k\}} c_i z_i \\
	&\geq & \sum_{i \in S \cup \{j\}} c_i w_i + \sum_{i \in S \cup \{k\}} c_i y_i\\
	& = & \sum_{i \in S \cup \{j\}} c_i w_i + f(S\cup \{j,k\}) - c_jy_j\\
	& \geq & \sum_{i \in S} c_iw_i + f(S\cup \{j,k\})
\end{eqnarray*}

\end{IEEEproof}

}

\tcb{
\subsection{Analysis of the Greedy algorithm for submodular function
maximization over matroids}
\label{s:submodular}
In this section we present a proof of the 
Fisher, Nemhauser and Wolsey~\cite{fisher1978analysis} result.

We begin with the case of general matroids.  Let $A$ be the set
created by the Greedy algorithm and let $B$ be the set created by the
optimum algorithm, OPT.  One property of matroids is that maximal
elements have the same cardinality and so we can assume that $B$ has
the same cardinality as $A$. Let $\{a_1,\ldots,a_m\}$ be the elements
of
$A$ and let $\{b_1,\ldots,b_m\}$ be the elements of $B$.  Note that
$\{b_1,\ldots,b_m\}\in I$ and $\{a_1,\ldots,a_{m-1}\}\in I$. Hence
by the exchange property of matroids, we have that there exists $b_j$
such that $\{a_1,\ldots,a_{m-1},b_j\}\in I$.  By renumbering the
elements of $B$ we can assume without loss of generality that
$j=m$. The definition of the greedy algorithm and the fact that it
picked $a_m$ for the $m$th element imply that,
\begin{eqnarray*}
&&f(a_1,\ldots,a_m)-f(a_1,\ldots,a_{m-1})\\
&\ge& f(a_1,\ldots,a_{m-1},b_m)-f(a_1,\ldots,a_{m-1}).
\end{eqnarray*}
Continuing in this fashion we have
$\{a_1,\ldots,a_i,b_{i+1},\ldots,a_m\}\in I$ and
$\{a_1,\ldots,a_{i-1},b_{i+1},\ldots,a_m\}\in I$. Hence by the
exchange property (and a possible renumbering of $B$), we have
$\{a_1,\ldots,a_{i-1},b_i,b_{i+1},\ldots,b_m\}\in I$.  Moreover by
the definition of the greedy algorithm and the submodularity of
$f(\cdots)$ we have,
\begin{eqnarray}
&&f(a_1,\ldots,a_i)-f(a_1,\ldots,a_{i-1})\nonumber\\
&\ge&f(a_1,\ldots,a_{i-1},b_i)-f(a_1,\ldots,a_{i-1}) \label{eq:greedy}
  \\
&\ge& f(a_1,\ldots,a_{i-1},b_i,b_{i+1},\ldots,b_m)-\nonumber\\
&&f(a_1,\ldots,a_{i-1},b_{i+1},\ldots,b_m).\nonumber
\end{eqnarray}
This implies,
\begin{eqnarray*}
&&f(B)\\
&=& f(b_1,\ldots,b_m)\\
&=&f(a_1,\ldots,a_m)+\sum_{i=1}^{m}
\left(f(a_1,\ldots,a_{i-1},b_i,\ldots,b_m)-\right.\\
&&\left.f(a_1,\ldots,a_i,b_{i+1},\ldots,b_m)\right)\\
&\le&f(a_1,\ldots,a_m)+\sum_{i=1}^{m}
\left(f(a_1,\ldots,a_i)-\right.\\
&&f(a_1,\ldots,a_{i-1})+f(a_1,\ldots,a_{i-1},b_{i+1},\ldots,b_m)\\
&&\left.-f(a_1,\ldots,a_i,b_{i+1},\ldots,b_m)\right)\\
&\le&f(a_1,\ldots,a_m)+\sum_{i=1}^{m}
\left(f(a_1,\ldots,a_i)-f(a_1,\ldots,a_{i-1})\right)\\
&=& f(a_1,\ldots,a_m)+f(a_1,\ldots,a_m)\\
&=&2f(A).
\end{eqnarray*}
(The last inequality follows from the fact that $f(\cdot)$ is
non-decreasing.)

For the case of partition matroids, we can assume that both $a_i$ and
$b_i$ are members of $\Gamma_i$.
The Greedy algorithm considered
the components of the partition in turn and chose $a_i$ instead of
$b_i$ when considering $\Gamma_i$.
Therefore, Inequality~\ref{eq:greedy} still holds.
The remainder of the IEEEproof follows in an identical manner.
\tcb{
\subsection{Solution in some Special Cases}
In the special case that all RBs experience the same amount of wireless fading, irrespective of the users, the problem has a polytime solution. In this case, we have 
\begin{eqnarray*}
	\gamma_{jk}=\gamma_k, \hspace{10pt} \forall j,k 
\end{eqnarray*}
Thus the constraint reduces to 

\begin{eqnarray*}
	y_{jk} \leq \gamma_k, \hspace{10pt} \forall j,k 
\end{eqnarray*}
Hence the optimal solution is given by allocating all RBs to the user $j^*$ with maximum $\frac{1}{R_j}$ and setting $y_{j^*k}=\gamma_k$ until the PON capacity is exhausted. The optimal objective value is 
\begin{eqnarray*}
	\big(\max_j \frac{1}{R_j}\big) \min (C, \sum_k \gamma_k).
\end{eqnarray*}
}
\subsection{Abbreviations}

\begin{tabular}{ll}
vRAN: & virtualized Radio Access Network \\
CU: & Central Unit \\
RU: & Remote Unit \\
RB: & Resource Block \\
LP: & Linear Program \\
DP: & Dynamic Program \\
MILP: & Mixed Integer Linear Program \\
FPTAS: &  Fully Polynomial-Time Approximation Scheme
\end{tabular}
}
\end{document}